\renewcommand*{\bibeidpunct}{\addcolon}
\title{Preserving Bifurcations through Moment Closures}
\author[1,2,3]{Christian Kuehn~\orcidlink{0000-0002-7063-6173}\,}
\author[1]{Jan Mölter~\orcidlink{0000-0002-5964-6207}\,}
\affil[1]{Department of Mathematics, School of Computation, Information and Technology, Technical University of Munich, Boltzmannstraße 3, 85748 Garching bei München, Germany}
\affil[2]{Munich Data Science Institute, Technical University of Munich, Walther-von-Dyck-Straße 10, 85748 Garching bei München, Germany}
\affil[3]{Complexity Science Hub Vienna, Josefstädter Straße 39, 1080 Vienna, Austria}
\date{}
\DeclareSymbolFont{dsrom}{U}{dsrom}{m}{n}
\DeclareMathSymbol{\dsI}{3}{dsrom}{"31}
\theoremstyle{plain}
\newtheorem{theorem}{Theorem}
\newtheorem{proposition}[theorem]{Proposition}
\newtheorem{lemma}[theorem]{Lemma}
\newtheorem{corollary}[theorem]{Corollary}
\newtheorem*{theorem*}{Theorem}
\newtheorem*{proposition*}{Proposition}
\newtheorem*{lemma*}{Lemma}
\newtheorem*{corollary*}{Corollary}
\theoremstyle{remark}
\newtheorem*{remark*}{Remark}
\theoremstyle{plain}
\newenvironment{customtheorem}[1]{\innercustomtheorem}{\endinnercustomtheorem}
\newcommand{\mathrelphantom}[1]{\ensuremath{\mathrel{\phantom{#1}}}}
\newcommand{\mathpunctuation}[1]{\ensuremath{\text{#1}}}
\newcommand{\slot}{\ensuremath{\, \mathbf{\cdot} \,}}
\newcommand{\naturals}{\ensuremath{\mathbb{N}}}
\newcommand{\reals}{\ensuremath{\mathbb{R}}}
\renewcommand{\Re}{\ensuremath{\operatorname{Re}}}
\DeclareMathOperator{\codim}{codim}
\DeclareMathOperator{\linearspan}{span}
\DeclareMathOperator{\kernel}{ker}
\DeclareMathOperator{\range}{ran}
\DeclareMathOperator{\rank}{rank}
\DeclareMathOperator{\trace}{tr}
\newcommand{\spectrum}[2][]{\ensuremath{\operatorname{spec_{#1}} #2}}
\newcommand{\inverse}[1]{\ensuremath{#1^{-1}}}
\newcommand{\transpose}[1]{\ensuremath{#1^{\top}}}
\newcommand{\orthogonal}[1]{\ensuremath{#1^{\perp}}}
\newcommand{\stack}[2]{\ensuremath{\genfrac{}{}{0pt}{}{#1}{#2}}}
\newcommand{\suchthat}{\ensuremath{\, : \,}}
\newcommand{\parenth}[1]{\ensuremath{\left( #1 \right)}}
\newcommand{\tparenth}[1]{\ensuremath{( #1 )}}
\newcommand{\bracket}[1]{\ensuremath{\left[ #1 \right]}}
\newcommand{\tbracket}[1]{\ensuremath{[ #1 ]}}
\newcommand{\anglebr}[1]{\ensuremath{\left\langle #1 \right\rangle}}
\newcommand{\tanglebr}[1]{\ensuremath{\langle #1 \rangle}}
\newcommand{\of}[1]{\ensuremath{\!\parenth{#1}}}
\newcommand{\tof}[1]{\ensuremath{\tparenth{#1}}}
\newcommand{\tabs}[1]{\ensuremath{\vert #1 \vert}}
\newcommand{\tnorm}[2][]{\ensuremath{\Vert #2 \Vert_{#1}}}
\newcommand{\inner}[3][]{\ensuremath{\anglebr{ #2 \vphantom{#3} , #3 \vphantom{#2}}_{#1}}}
\newcommand{\tinner}[3][]{\ensuremath{\tanglebr{ #2 \vphantom{#3} , #3 \vphantom{#2}}_{#1}}}
\newcommand{\set}[1]{\ensuremath{\left\lbrace #1 \right\rbrace}}
\newcommand{\tset}[1]{\ensuremath{\lbrace #1 \rbrace}}
\newcommand{\tball}[3][]{\ensuremath{B_{#2}^{#1}\tof{#3}}}
\newcommand{\varid}{\ensuremath{{\dsI}}}
\newcommand{\derivative}[2][]{\ensuremath{\left.\tfrac{\mathrm{d}}{\mathrm{d}#2}\ifthenelse{\isempty{#1}}{\right.}{\right\vert_{#2=#1}}}}
\DeclareMathOperator{\sign}{sgn}
\renewcommand{\O}{\ensuremath{\mathcal{O}}}
\newcommand{\tprojector}[2]{\ensuremath{\tinner{#1}{\cdot \,} #2}}
\newenvironment{psmallmatrix}{\left(\begin{smallmatrix}}{\end{smallmatrix}\right)}
\newcommand{\A}{\ensuremath{\mathrm{A}}}
\newcommand{\B}{\ensuremath{\mathrm{B}}}
\renewcommand{\S}{\ensuremath{\mathrm{S}}}
\newcommand{\I}{\ensuremath{\mathrm{I}}}
\newcommand{\X}{\ensuremath{\mathrm{X}}}
\newcommand{\namedlabel}[2]{%
  \@bsphack
  \csname phantomsection\endcsname%
  \def\@currentlabel{#2}{\label{#1}}%
  \@esphack
}
\newcommand{\msclink}[1]{\href{https://zbmath.org/classification/?q=cc:#1}{#1}}
\begin{document}

\maketitle
\setcounter{page}{1}

\begin{abstract}
    Moment systems arise in a wide range of contexts and applications, e.g. in network modeling of complex systems. Since moment systems consist of a high or even infinite number of coupled equations, an indispensable step in obtaining a low-dimensional representation that is amenable to further analysis is, in many cases, to select a moment closure. A moment closure consists of a set of approximations that express certain higher-order moments in terms of lower-order ones, so that applying those leads to a closed system of equations for only the lower-order moments. Closures are frequently found drawing on intuition and heuristics to come up with quantitatively good approximations. In contrast to that, we propose an alternative approach where we instead focus on closures giving rise to certain qualitative features, such as bifurcations. Importantly, this fundamental change of perspective provides one with the possibility of classifying moment closures rigorously in regard to these features. This makes the design and selection of closures more algorithmic, precise, and reliable. In this work, we carefully study the moment systems that arise in the mean-field descriptions of two widely known network dynamical systems, the SIS epidemic and the adaptive voter model. We derive conditions that any moment closure has to satisfy so that the corresponding closed systems exhibit the transcritical bifurcation that one expects in these systems coming from the stochastic particle model.
    
    \vspace{2\parsep}
    
    \textbf{Keywords} {network dynamical systems} $\cdot$ {mean-field models} $\cdot$ {moment closures} $\cdot$ {local bifurcations}
    
    \textbf{Mathematics Subject Classification 2020} \msclink{34C23} $\cdot$ \msclink{37G99} $\cdot$ \msclink{37N99} $\cdot$ \msclink{70G60}
\end{abstract}

\setcounter{secnumdepth}{0}
\tableofcontents

\section{Introduction}

Moment systems are ubiquitous and appear in a wide range of applications, from statistical physics and kinetic theory~\parencite{bilger1993conditional,levermore1996moment,levermore1998gaussian,rangan2006maximum,abdelmalik2016moment} to chemical reaction systems and stochastic processes~\parencite{gillespie2009momentclosure,lee2009moment,bronstein2018variational,nasell2003extension,groenlund2013transcription,lombardo2014nonmonotonic} to contact processes on networks~\parencite{keeling1997correlation,eames2002modeling,gross2006epidemic,house2011insights,huepe2011adaptive,house2012modelling,taylor2012markovian,gleeson2013binary,demirel2014momentclosure,house2015algebraic,pellis2015exact,wuyts2022meanfield}. Moments are standard observables of large dynamical systems and deriving their evolution leads to coupled ordinary differential equations. In network dynamical systems, typical examples of moments are averages or densities computable from the current state. In general, i.e. when the system in question does not have a special structure, there is an infinite number of moments to consider that will then describe the system in its entirety. These high-dimensional systems are generally not amenable to further analysis.

The prevailing strategy in practice is therefore to reduce the size of these systems to one that is manageable either analytically or numerically by means of application of closure relations~\parencite{pellis2015exact,kuehn2016moment,wuyts2022meanfield}. Closure relations are generally approximations with which, from a certain order on, all higher-order moments are expressed in terms of the lower-order ones. This breaks the hierarchy of coupled ordinary differential equations and results in a closed system of equations for only the lower-order moments.

To illustrate this point, suppose we are given the system
\begin{equation}
	\begin{split}
		\dot{x}_{1} &= f_{1}\tof{x_{1}, x_{2}, \ldots} \\
		\dot{x}_{2} &= f_{2}\tof{x_{1}, x_{2}, \ldots} \\
		\vdots \hspace{1.5mm} &= \hspace{10mm} \vdots \\
	\end{split}
	\label{eq:general-moment-system}
\end{equation}
which consists of an infinite number of evolution equations for moments $x_{n}$, $n \in \naturals$, where we use Newton's notation for the derivative with respect to time. A closure relation for some $\kappa \in \naturals$ is a mapping $H$ such that $H\tof{x_{1}, x_{2}, \ldots x_{\kappa}} = \tparenth{x_{\kappa+1}, x_{\kappa+2}, \ldots}$ which renders the originally infinite-dimensional system the closed, finite-dimensional system
\begin{equation}
	\begin{split}
		\dot{x}_{1} &= f_{1}\tof{x_{1}, x_{2}, \ldots x_{\kappa}, H\tof{x_{1}, x_{2}, \ldots x_{\kappa}}} \\
		\dot{x}_{2} &= f_{2}\tof{x_{1}, x_{2}, \ldots x_{\kappa}, H\tof{x_{1}, x_{2}, \ldots x_{\kappa}}} \\
		\vdots \hspace{1.5mm} &= \hspace{10mm} \vdots \\
		\dot{x}_{\kappa} &= f_{\kappa}\tof{x_{1}, x_{2}, \ldots x_{\kappa}, H\tof{x_{1}, x_{2}, \ldots x_{\kappa}}} \\
	\end{split}
	\label{eq:general-closed-moment-system}
\end{equation}
Finding suitable closure relations involves a great deal of insight into the dynamics. Usually, they are motivated heuristically in various ways, depending on the system, and then justified empirically. Yet, the key question that has remained open is how well the closed system \eqref{eq:general-closed-moment-system} approximates the original system \eqref{eq:general-moment-system}. For general non-linear systems, this is an unsolved and extremely difficult problem. In part, this is because a given closure relation might not be suitable throughout the entire phase space of the moment system but only locally in some neighbourhood~\parencite{kuehn2016moment}. Therefore, trying to find globally approximating maps $H$ is extremely difficult. Even if one were to find an excellent approximating map $H$ by intuition or a closed black-box scheme, we would still be faced with the problem of explaining the structure of $H$.

Rather than trying to derive closures that approximate the original system quantitatively well across the entirety of phase space, we propose the alternative approach of focusing on closures that give rise to or preserve certain qualitative features of the system locally.
More specifically, here we take a dynamical perspective and characterise closure relations that qualitatively preserve bifurcations and, in particular, transcritical bifurcations. Bifurcations are an important dynamical feature corresponding to fundamental changes in the system under parameter variation~\parencite{guckenheimer1983nonlinear,crawford1991introduction}. As such, and for the low-dimensional closed system to have any meaningful relation to the original system dynamically, it is paramount to preserve them through the closure. Importantly, we argue that approaching the problem of finding closure relations from such a perspective enables one to rigorously derive conditions for their validity. This strategy provides an alternative, explainable, robust, and extensible route for determining closure relations by limiting the space of available closures by rigorous dynamical principles.

However, since even in a generic simple, i.e. linear, moment system this is still a very difficult problem (see below). In this work, we will consider two concrete, paradigmatic network dynamical systems, in particular, the mean-field moment systems associated with the SIS epidemic and the adaptive voter model, both of which we will introduce in what follows. The main results of this work then amount to rigorous conditions that moment closures for both systems must satisfy to give rise to the expected transcritical bifurcations. Finally, we explicitly confirm that the closures that are generally used do satisfy these conditions.

\section{Results}

\begin{figure}[tbp]
	\centering
	
	\includegraphics[page=1,trim={0 0.35cm 0 0},clip]{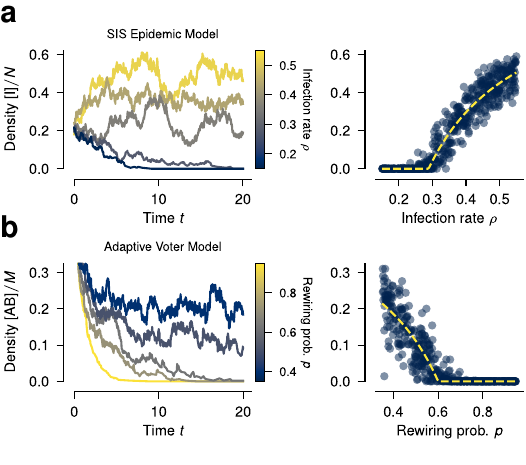}
	
	\caption{\textbf{Indications of a bifurcation in the SIS epidemic and the adaptive voter model.} Both the SIS epidemic (\textbf{\textsf{a}}) and the adaptive voter model (\textbf{\textsf{b}}) were simulated on a network with regular topology for different values of the infection rate and the rewiring probability, respectively. Depending on those, the prevalence of infected and active edges, respectively, either vanishes or remains positive after some finite time (left column). In fact, when considering this prevalence after some finite time in expectation, in both cases the system seems to exhibit a (supercritical) transcritical bifurcation (right column).}
	\label{fig:bifurcation-signatures}
\end{figure}

In a network dynamical system, every node of an underlying network is a dynamical system whose (in general, stochastic) dynamics is coupled to the dynamics of the nodes in its neighbourhood on the network~\parencite{newman2010networks,boccaletti2006complex,porter2016dynamical}. If the system is fully adaptive, then the topology of the network is also dynamic and coupled to the dynamics of the nodes~\parencite{gross2008adaptive,gross2009adaptive,sayama2013modeling}. As is the case for both the SIS epidemic and the adaptive voter model, we will assume that the state space of the nodes' dynamics is discrete. Such systems frequently admit a mean-field description in terms of network moments~\parencite{sharkey2008deterministic,house2009motif,sharkey2011deterministic,porter2016dynamical,kiss2017mathematics}. In this context, network moments are the expected number of certain motifs in the underlying network, such as a node in a certain state or a link connecting nodes in certain states. For instance, we will denote with $\tbracket{\sigma}$, $\tbracket{\sigma\,\sigma'}$, and $\tbracket{\sigma\,\sigma'\sigma''}$ the expected number of nodes in state $\sigma$, the expected number of pairs of nodes in states $\sigma$ and $\sigma'$ connected by a link, and the expected number of triples of nodes in states $\sigma$, $\sigma'$, and $\sigma''$ connected by links between the nodes in states $\sigma$ and $\sigma'$ as well as $\sigma'$ and $\sigma''$, respectively. Finally, we say that a network moment is of order $\kappa$ if its associated motif has size $\kappa$. Consequently, this enables us to classify network moments, so that, for instance, $\tbracket{\sigma}$, $\tbracket{\sigma\,\sigma'}$, and $\tbracket{\sigma\,\sigma'\sigma''}$ are moments of orders $1$, $2$, and $3$, respectively. Furthermore, we say that a moment system is of order $\kappa$ if it involves only evolution equations for moments up to order $\kappa$ and that a closure relation is of order $\kappa$ if its domain consists only of moments up to order $\kappa$.

For a network dynamical system with binary-state dynamics, such mean-field equations, for instance, of order $2$, then take the general form
\begin{equation}
	\left\lbrace
	\begin{split}
		\dot{\tbracket{\sigma_{1}}} &= f_{\tbracket{\sigma_{1}}}\tof{\tbracket{\sigma_{1}}, \tbracket{\sigma_{2}}, \tbracket{\sigma_{1}\sigma_{1}}, \tbracket{\sigma_{1}\sigma_{2}}, \tbracket{\sigma_{2}\sigma_{2}}, \ldots, \lambda} \\
		\dot{\tbracket{\sigma_{2}}} &= f_{\tbracket{\sigma_{2}}}\tof{\tbracket{\sigma_{1}}, \tbracket{\sigma_{2}}, \tbracket{\sigma_{1}\sigma_{1}}, \tbracket{\sigma_{1}\sigma_{2}}, \tbracket{\sigma_{2}\sigma_{2}}, \ldots, \lambda} \\
		\dot{\tbracket{\sigma_{1}\sigma_{1}}} &= f_{\tbracket{\sigma_{1}\sigma_{1}}}\tof{\tbracket{\sigma_{1}}, \tbracket{\sigma_{2}}, \tbracket{\sigma_{1}\sigma_{1}}, \tbracket{\sigma_{1}\sigma_{2}}, \tbracket{\sigma_{2}\sigma_{2}}, \ldots, \lambda} \\
		\dot{\tbracket{\sigma_{1}\sigma_{2}}} &= f_{\tbracket{\sigma_{1}\sigma_{2}}}\tof{\tbracket{\sigma_{1}}, \tbracket{\sigma_{2}}, \tbracket{\sigma_{1}\sigma_{1}}, \tbracket{\sigma_{1}\sigma_{2}}, \tbracket{\sigma_{2}\sigma_{2}}, \ldots, \lambda} \\
		\dot{\tbracket{\sigma_{2}\sigma_{2}}} &= f_{\tbracket{\sigma_{2}\sigma_{2}}}\tof{\tbracket{\sigma_{1}}, \tbracket{\sigma_{2}}, \tbracket{\sigma_{1}\sigma_{1}}, \tbracket{\sigma_{1}\sigma_{2}}, \tbracket{\sigma_{2}\sigma_{2}}, \ldots, \lambda} \\
	\end{split}
	\right.
	\label{eq:general-network-moment-system}
\end{equation}
where $f_{M}$ for some moment $M$ denotes the component of a vector field $f$ in the \enquote{direction} $M$, $\lambda$ is a parameter, and the ellipses indicate potential dependence on moments of order $3$ and higher.

As we will see in the following, the mean-field moment systems of the SIS epidemic and the adaptive voter model are precisely of this form.

Moreover, corresponding to the expected number of certain network motifs, the phase space variables are non-negative quantities. Hence only the non-negative orthant of the phase space is physically relevant, and in such a scenario, we can distinguish between two types of transcritical bifurcations. A transcritical bifurcation is called supercritical if we follow a stable branch in parameter space and, above the bifurcation point there exists another stable branch; it is called subcritical otherwise~\parencite{kuehn2021universal}. The distinction between super- and subcritical scenarios is particularly important for applications, as supercritical changes only induce a small change in the state, while subcritical ones can lead to drastic large jumps in the state under parametric variation~\parencite{gross2006epidemic,iacopini2019simplicial}.

Finally, we note that mean-field moment systems of contact processes on networks are in many cases linear, and it is only through the application of a closure relation that non-linear terms are introduced in the system.
However, this observation does not necessarily simplify the problem of deriving conditions on the closure relations that give rise to a certain bifurcation.

To illustrate this, consider the system $\dot{\Xi} = F\tof{\Xi, \lambda} := A_{\lambda}\tbracket{\Xi} + R_{\lambda}\tbracket{H\tof{\Xi}}$ with $\Xi \in \reals^{d}$, $A_{\lambda}: \reals^{d} \to \reals^{d}$ and $R_{\lambda}: \reals^{m} \to \reals^{d}$ linear operators, and $H: \reals^{d} \to \reals^{m}$ a smooth, non-linear map with $H\tof{0} = 0$.
A necessary condition for the presence of a transcritical bifurcation is that there exists $\lambda_{*}$ such that $\rank{D_{\Xi}F\tof{0, \lambda_{*}}} = \rank\tof{A_{\lambda_{*}} + R_{\lambda_{*}} D_{\Xi}H\of{0}} = d - 1$.
Yet, even under strong generic assumptions on $A_{\lambda}$ and $R_{\lambda}$ this is hard to guarantee. This is because the rank is subadditive, so for matrices $A$ and $R$ we only have that $\tabs{\rank{A} - \rank{R}} \leq \rank\tof{A + R} \leq \rank{A} + \rank{R}$~\parencite{horn2012matrix}. But this provides very little control, and one can hardly derive any concrete conditions on $H$.

\subsection{The SIS epidemic model}

The SIS epidemic model describes the spread of an epidemic in a population without immunity so that individuals are either susceptible (\enquote{S}) or infected (\enquote{I}) and upon infection or recovery switch from one state to the other. The infection spreads via the links of an underlying network, and, subject to Poisson processes, susceptible individuals become infected from connected infected individuals, or infected individuals recover.

In the mean field, these dynamics can be described by the moment system~\parencite{kiss2017mathematics,simon2011exact}
\begin{equation}
	\left\lbrace
	\begin{split}
		\dot{\tbracket{\S}} &= \tbracket{\I} - \rho \, \tbracket{\S\I} \\
		\dot{\tbracket{\I}} &= \rho \, \tbracket{\S\I} - \tbracket{\I} \\
		\dot{\tbracket{\S\S}} &= 2 \tbracket{\S\I} - 2 \rho \tbracket{\S\S\I} \\
		\dot{\tbracket{\S\I}} &= \tbracket{\I\I} - \tbracket{\S\I} + \rho \tparenth{\tbracket{\S\S\I} - \tbracket{\I\S\I} - \tbracket{\S\I}} \\
		\dot{\tbracket{\I\I}} &= -2 \tbracket{\I\I} + 2 \rho \tparenth{\tbracket{\I\S\I} + \tbracket{\S\I}} \\
	\end{split}
	\right.
	\label{eq:general-sis-epidemic-model}
\end{equation}
where the infection rate $\rho$ is a strictly positive parameter, and time is measured in units such that the recovery rate is $1$.
For this system, one notes that $\dot{\tbracket{\S}} + \dot{\tbracket{\I}} = 0$ and $\dot{\tbracket{\S\S}} + 2 \dot{\tbracket{\S\I}} + \dot{\tbracket{\I\I}} = 0$. Consequently, the total number of individuals $N$ and links $M$ are conserved under the dynamics, so that the above system reduces to the equivalent system
\begin{equation}
	\begin{split}
		&\left\lbrace
		\begin{split}
			\dot{\tbracket{\I}} &= \rho \, \tbracket{\S\I} - \tbracket{\I} \\
			\dot{\tbracket{\S\I}} &= \tbracket{\I\I} - \tbracket{\S\I} + \rho \tparenth{\tbracket{\S\S\I} - \tbracket{\I\S\I} - \tbracket{\S\I}} \\
			\dot{\tbracket{\I\I}} &= -2 \tbracket{\I\I} + 2 \rho \tparenth{\tbracket{\I\S\I} + \tbracket{\S\I}} \\
		\end{split}
		\right.
		\\[0.5cm]
		&\quad \text{\parbox{\linewidth-1.5cm}{subject to $\tbracket{\S} + \tbracket{\I} = N$ and $\tbracket{\S\S} + 2 \tbracket{\S\I} + \tbracket{\I\I} = 2 M$.}}
	\end{split}
	\label{eq:reduced-general-sis-epidemic-model}
\end{equation}

Upon variation of the infection rate $\rho$, we expect the SIS epidemic model to exhibit a transcritical bifurcation in, e.g. the dynamic variable $\tbracket{\I}$, the number of infected individuals (Fig.~\ref{fig:bifurcation-signatures}a). Below the bifurcation point, the so-called disease-free equilibrium with a vanishing number of infected individuals ($\tbracket{\I} = 0$) is stable, whereas, above that point, the epidemic spread can sustain itself, and the system tends to a dynamic equilibrium with a non-vanishing number of infected individuals ($\tbracket{\I} > 0$).

\namedlabel{thms:sis-epidemic-model-closure}{I}

Now, for a closure relation $H$ of order $1$ with $\tbracket{\S\I} = H\tof{\tbracket{\I}}$ we have the following result.

\begin{customtheorem}{I.1}
	\label{thm:sis-epidemic-model-order-1-closure}
	Assume that $H$ is at least twice continuously differentiable in a neighbourhood around $0$ and that $H\tof{0} = 0$, and suppose that $\frac{1}{\rho_{*}} = H'\tof{0} > 0$.
	Then, if $H''\tof{0} \neq 0$, the closed system exhibits a transcritical bifurcation at $\rho = \rho_{*}$.
	
	In particular, the bifurcation is supercritical (subcritical) if $H''\tof{0} \overset{\tof{>}}{<} 0$.
\end{customtheorem}

Despite appearing abstract at first sight, the last result can be interpreted quite directly. In fact, under the assumptions of the theorem, we can expand $H$ around $0$ in a Taylor series and obtain that $H\tof{\tbracket{\I}} = H'\tof{0} \tbracket{\I} + \frac{1}{2} H''\tof{0} \tbracket{\I}^{2} + \O\tof{\tbracket{\I}^{3}}$. Thus,
\begin{equation*}
	\dot{\tbracket{\I}} = \rho \, H\tof{\tbracket{\I}} - \tbracket{\I} = \tparenth{\rho H'\tof{0} - 1} \tbracket{\I} + \frac{1}{2} H''\tof{0} \tbracket{\I}^{2} + \O\tof{\tbracket{\I}^{3}} \mathpunctuation{,}
\end{equation*}
which one recognises as essentially the normal form of a transcritical bifurcation if $H''\tof{0} \neq 0$. Moreover, it is a supercritical bifurcation if $H''\tof{0} < 0$ and a subcritical otherwise.

If, instead of a closure relation of order $1$, we consider a closure relation $H$ of order $2$ with $\tparenth{\tbracket{\S\S\I}, \tbracket{\I\S\I}} = H\tof{\tbracket{\I}, \tbracket{\S\I}, \tbracket{\I\I}} =: \tparenth{H^{\tparenth{\tbracket{\S\S\I}}}, H^{\tparenth{\tbracket{\I\S\I}}}}\tof{\tbracket{\I}, \tbracket{\S\I}, \tbracket{\I\I}}$, we get the following.

\begin{customtheorem}{I.2}
	\label{thm:sis-epidemic-model-order-2-closure}
	Assume that $H$ is rational and that $H\tof{\tbracket{\I}, \tbracket{\S\I}, \tbracket{\I\I}} = 0$ whenever $\tbracket{\S\I} = 0$. Then $H$ can be factorised so that $H\tof{\tbracket{\I}, \tbracket{\S\I}, \tbracket{\I\I}} = \tbracket{\S\I} \, \tilde{H}\tof{\tbracket{\I}, \tbracket{\S\I}, \tbracket{\I\I}}$.
	Moreover, suppose that $\tilde{H}$ is at least twice continuously differentiable in a neighbourhood around $0$ and that $\frac{1}{\rho_{*}} = \tilde{H}^{\tparenth{\tbracket{\S\S\I}}}\tof{0} > 0$.
	Then, if
	\begin{equation*}
		\partial_{\tbracket{\I}} \tilde{H}^{\tparenth{\tbracket{\S\S\I}}}\tof{0} + \tilde{H}^{\tparenth{\tbracket{\S\S\I}}}\tof{0} \partial_{\tbracket{\S\I}} \tilde{H}^{\tparenth{\tbracket{\S\S\I}}}\tof{0} + \tparenth{1 + \tilde{H}^{\tparenth{\tbracket{\I\S\I}}}\tof{0}} \partial_{\tbracket{\I\I}} \tilde{H}^{\tparenth{\tbracket{\S\S\I}}}\tof{0} \neq 0 \mathpunctuation{,}
	\end{equation*}
	the closed system exhibits a transcritical bifurcation at $\rho = \rho_{*}$.
	
	In particular, provided that $2 \tilde{H}^{\tparenth{\tbracket{\S\S\I}}}\tof{0} + \tilde{H}^{\tparenth{\tbracket{\I\S\I}}}\tof{0} + 1 > 0$, the bifurcation is supercritical (subcritical) if
	\begin{equation*}
		\partial_{\tbracket{\I}} \tilde{H}^{\tparenth{\tbracket{\S\S\I}}}\tof{0} + \tilde{H}^{\tparenth{\tbracket{\S\S\I}}}\tof{0} \partial_{\tbracket{\S\I}} \tilde{H}^{\tparenth{\tbracket{\S\S\I}}}\tof{0} + \tparenth{1 + \tilde{H}^{\tparenth{\tbracket{\I\S\I}}}\tof{0}} \partial_{\tbracket{\I\I}} \tilde{H}^{\tparenth{\tbracket{\S\S\I}}}\tof{0} \overset{\tof{>}}{<} 0 \mathpunctuation{.}
	\end{equation*}
\end{customtheorem}

\begin{remark*}
The condition that the closure vanishes whenever $\tbracket{\S\I} = 0$ is to ensure consistency and motivated through the network topology. Then, the condition $\tilde{H}^{\tparenth{\tbracket{\S\S\I}}}\tof{0} > 0$ ensures the existence of the bifurcation and the condition $2 \tilde{H}^{\tparenth{\tbracket{\S\S\I}}}\tof{0} + \tilde{H}^{\tparenth{\tbracket{\I\S\I}}}\tof{0} + 1 > 0$, the exchange of stability at the bifurcation, which one finds to be frequently satisfied with a closure approximating network motifs.
\end{remark*}

Proving the existence of bifurcations in higher dimensions tends to be a bit more involved. As such, we defer the proof of this result to the next section. However, before moving on, let's briefly pause and reconsider some closure relations that are used in the context of the SIS epidemic model in light of Theorems~\ref{thm:sis-epidemic-model-order-1-closure} and \ref{thm:sis-epidemic-model-order-2-closure}.

\begin{figure}[tbp]
	\centering
	
	\includegraphics[page=2,trim={0 0.45cm 0 0},clip]{figures}
	
	\caption{\textbf{Bifurcation diagrams for the SIS epidemic model under different closures.} Depending on the closure, the SIS epidemic model can exhibit very different bifurcation behaviours. Under the frequently used closure $\tbracket{\X\S\I} = \zeta^{\tparenth{2}} \frac{\tbracket{\S\X} \tbracket{\S\I}}{\tbracket{\S}}$ ($\zeta^{\tparenth{2}} > 0$) (\textbf{\textsf{a}}), and depending on whether $\zeta^{\tparenth{2}} > \zeta^{\tparenth{2}}_{*}$ or $\zeta^{\tparenth{2}} < \zeta^{\tparenth{2}}_{*}$ for $\zeta^{\tparenth{2}}_{*} = \frac{1}{2} \tparenth{1 - \frac{N}{2 M}}$, we can observe a super- or subcritical transcritical bifurcation, respectively. Alternative closures (\textbf{\textsf{b}}) can give rise to the same, a different, or no bifurcation. For instance, the closure $\tbracket{\X\S\I} = \xi \tbracket{\X} \tbracket{\S\I}$ (top left) gives rise to a supercritical transcritical bifurcation, the closure $\tbracket{\X\S\I} = \xi \tparenth{\tbracket{\X} + \tbracket{\I\I}}$ (bottom right) to a subcritical pitchfork bifurcation, and the closure $\tbracket{\X\S\I} = 0$ (top right) to no bifurcation.}
	\label{fig:SIS-bifurcations}
\end{figure}

At order $1$, the most frequently used closure relation is the homogeneous closure $\tbracket{\S\I} = \zeta^{\tparenth{1}} \tbracket{\S} \tbracket{\I}$ for a strictly positive constant $\zeta^{\tparenth{1}}$ (typically, $\zeta^{\tparenth{1}} = \frac{\tanglebr{k}}{N}$ with $\tanglebr{k}$ the mean degree)~\parencite{simon2011exact}. For this, $H\tof{\tbracket{\I}} = \zeta^{\tparenth{1}} \tparenth{N - \tbracket{\I}} \tbracket{\I}$ with $H'\tof{0} = \zeta^{\tparenth{1}} N$ and $H''\tof{0} = - 2 \zeta^{\tparenth{1}}$, which indeed yields a supercritical transcritical bifurcation at $\rho_{*} = \frac{1}{\zeta^{\tparenth{1}} N}$.

Similarly, at order $2$, the most frequently used closure relation is $\tbracket{\S\S\I} = \zeta^{\tparenth{2}} \frac{\tbracket{\S\S} \tbracket{\S\I}}{\tbracket{\S}}$ and $\tbracket{\I\S\I} = \zeta^{\tparenth{2}} \frac{\tbracket{\S\I}^{2}}{\tbracket{\S}}$ for a strictly positive constant $\zeta^{\tparenth{2}}$ (typically, if the topology is approximately regular, $\zeta^{\tparenth{2}} = \frac{\tanglebr{k} - 1}{\tanglebr{k}}$ with $\tanglebr{k}$ the mean degree, while, if the precise topology is unknown, $\zeta^{\tparenth{2}} = 1$ is a reasonable assumption coming from an Erdős–Rényi topology)~\parencite{sharkey2008deterministic,sharkey2011deterministic,simon2011exact}. Thus, $H\tof{\tbracket{\I}, \tbracket{\S\I}, \tbracket{\I\I}} = \frac{\zeta^{\tparenth{2}}}{N - \tbracket{\I}} \tparenth{\tparenth{2 M - 2 \tbracket{\S\I} - \tbracket{\I\I}} \tbracket{\S\I}, \tbracket{\S\I}^{2}}$ and $\tilde{H}\tof{\tbracket{\I}, \tbracket{\S\I}, \tbracket{\I\I}} = \frac{\zeta^{\tparenth{2}}}{N - \tbracket{\I}} \tparenth{2 M - 2 \tbracket{\S\I} - \tbracket{\I\I}, \tbracket{\S\I}}$ so that $\tilde{H}^{\tparenth{\tbracket{\S\S\I}}}\tof{0} = \frac{2 \zeta^{\tparenth{2}} M}{N}$ and $\tilde{H}^{\tparenth{\tbracket{\I\S\I}}}\tof{0} = 0$, and therefore $2 \tilde{H}^{\tparenth{\tbracket{\S\S\I}}}\tof{0} + \tilde{H}^{\tparenth{\tbracket{\I\S\I}}}\tof{0} + 1 > 0$ and $\partial_{\tbracket{\I}} \tilde{H}^{\tparenth{\tbracket{\S\S\I}}}\tof{0} + \tilde{H}^{\tparenth{\tbracket{\S\S\I}}}\tof{0} \partial_{\tbracket{\S\I}} \tilde{H}^{\tparenth{\tbracket{\S\S\I}}}\tof{0} + \tparenth{1 + \tilde{H}^{\tparenth{\tbracket{\I\S\I}}}\tof{0}} \partial_{\tbracket{\I\I}} \tilde{H}^{\tparenth{\tbracket{\S\S\I}}}\tof{0} = - \frac{4 \zeta^{\tparenth{2}} M}{N^{2}} \tparenth{\zeta^{\tparenth{2}} - \frac{1}{2} \tparenth{1 - \frac{N}{2 M}}}$. Thus, if $\zeta^{\tparenth{2}} > \frac{1}{2}\tparenth{1 - \frac{N}{2 M}}$, this yields a supercritical transcritical bifurcation at $\rho_{*} = \frac{N}{2 \zeta^{\tparenth{2}} M}$ and otherwise a subcritical transcritical bifurcation (Fig.~\ref{fig:SIS-bifurcations}a).

This closure relation has proved to be effective also quantitatively when the network topology is (close to) regular. To account for clustering in the underlying network topology, the closures $\tbracket{\X\S\I} = \zeta^{\tparenth{2}} \frac{\tbracket{\X\S} \tbracket{\S\I}}{\tbracket{\S}} \tparenth{1 - \phi \tparenth{1 - \frac{1}{\zeta^{\tparenth{1}}} \frac{\tbracket{\X\I}}{\tbracket{\X} \tbracket{\I}}}}$ or $\tbracket{\X\S\I} = \zeta^{\tparenth{2}} \frac{\tbracket{\X\S} \tbracket{\S\I}}{\tbracket{\S}} \tparenth{1 - \phi \tparenth{1 - \zeta^{\tparenth{1}} N \frac{\tbracket{\S}^{2} \tbracket{\I} \tbracket{\X\I}}{\tparenth{\tbracket{\S\S} \tbracket{\I} + \tbracket{\S} \tbracket{\I\I}} \tbracket{\X} \tbracket{\S\I}}}}$ for $\X \in \tset{\S, \I}$ with clustering coefficient $\phi$ have been proposed~\parencite{keeling1999effects,rand1999correlation,sharkey2006pairlevel,house2010impact}.
Yet, Theorem~\ref{thm:sis-epidemic-model-order-2-closure} does not immediately apply to these closures, since $\frac{\tbracket{\X\S\I}}{\tbracket{\S\I}}$ is not continuous and thus also not differentiable at $0$. Instead, one might consider the corresponding regularised closures $\tbracket{\X\S\I}_{\delta} = \zeta^{\tparenth{2}} \frac{\tbracket{\X\S} \tbracket{\S\I}}{\tbracket{\S}} \tparenth{1 - \phi \tparenth{1 - \frac{1}{\zeta^{\tparenth{1}}} \frac{\tbracket{\X\I}}{\tbracket{\X} \tbracket{\I} + \delta}}}$ and $\tbracket{\X\S\I}_{\delta} = \zeta^{\tparenth{2}} \frac{\tbracket{\X\S} \tbracket{\S\I}}{\tbracket{\S}} \tparenth{1 - \phi \tparenth{1 - \zeta^{\tparenth{1}} N \frac{\tbracket{\S}^{2} \tbracket{\I} \tbracket{\X\I}}{\tparenth{\tbracket{\S\S} \tbracket{\I} + \tbracket{\S} \tbracket{\I\I}} \tbracket{\X} \tbracket{\S\I} + \delta}}}$ for some regularisation parameter $0 < \delta \ll 1$, which ensures that the singularity in these closures is outside the physically relevant region of the phase space. In both cases and provided that $\phi \neq 1$, one can show that there occurs a transcritical bifurcation at $\rho_{*} = \frac{N}{2 \tparenth{1 - \phi} \zeta^{\tparenth{2}} M}$. Moreover, for the first of these two closures, the bifurcation is always subcritical if $\delta$ is sufficiently small, while for the second the type of bifurcation is independent of $\delta$ and supercritical if $\tparenth{1 - \phi} \zeta^{\tparenth{2}} > \frac{1}{2}\tparenth{1 - \frac{N}{2 M}}$ and subcritical otherwise.

Finally, the closure $\tbracket{\X\S\I} = \frac{\tbracket{\X\S} \tbracket{\S\I}}{\tbracket{\S\S} + \tbracket{\S\I}} \tparenth{\frac{\tanglebr{k^{2}} \tparenth{\tanglebr{k^{2}} \tbracket{\S} - \tanglebr{k} \tparenth{\tbracket{\S\S} + \tbracket{\S\I}}} + \tanglebr{k^{3}} \tparenth{\tparenth{\tbracket{\S\S} + \tbracket{\S\I}} - \tanglebr{k} \tbracket{\S}}}{\tparenth{\tbracket{\S\S} + \tbracket{\S\I}} \tparenth{\tanglebr{k^{2}} - \tanglebr{k}^{2}}} - 1}$ for $\X \in \tset{\S, \I}$ with $\tanglebr{k^{m}}$ the $m$\textsuperscript{th} moment of the degree-distribution has been proposed specifically for more irregular network topologies~\parencite{simon2016super}. Again, one can show that there occurs a transcritical bifurcation at $\rho_{*} = \frac{1}{\frac{\tanglebr{k^{2}}}{\tanglebr{k}} - 1}$, which is supercritical if $\tanglebr{k} - 2 \tanglebr{k^{2}} + \tanglebr{k^{3}} = \tanglebr{k \tparenth{k-1}^{2}} > 0$ and subcritical otherwise.

Besides these established closures, we can also imagine other closures. One alternative could be $\tbracket{\X\S\I} = \xi \tbracket{\X} \tbracket{\S\I}$ for $\X \in \tset{\S, \I}$ and a strictly positive constant $\xi$. In fact, one can verify that this also yields a supercritical transcritical bifurcation, in this case, at $\rho_{*} = \frac{1}{\xi N}$ (Fig.~\ref{fig:SIS-bifurcations}b).
In contrast to that, if we consider the pure and direct truncation closure $\tbracket{\X\S\I} = 0$ which neglects all higher-order moments, the system becomes linear, and one finds that it does not satisfy the conditions above, and it is also immediately obvious from the phase diagram that we do not get a transcritical bifurcation in this case (Fig.~\ref{fig:SIS-bifurcations}b). Similarly, the closure $\tbracket{\X\S\I} = \xi \tbracket{\S\I} \tparenth{\tbracket{\X} + \tbracket{\I\I}}$ fails to yield a transcritical bifurcation (Fig.~\ref{fig:SIS-bifurcations}b). While the latter is explicitly constructed as a counterexample, it is structurally rather similar to the alternative closure from above. Yet, it demonstrates that the conditions we have derived allow us to effectively draw a distinction between them. 

\subsection{The adaptive voter model}

The adaptive voter model is an extension of classical models of opinion and consensus formation~\parencite{clifford1973model,holley1975ergodic} concerned with a population in which every individual subscribes to either one of two mutually exclusive, opposing opinions, $\A$ and $\B$~\parencite{holme2006nonequilibrium,demirel2014momentclosure,zschaler2012adaptive}. Individuals interact via the links of an underlying network, and, subject to a Poisson process, when there are two connected individuals with different opinions, either one individual adopts the opinion of the other with probability $1 - p$, or the link between them is rewired from one to another individual with the same opinion with probability $p$.

In the mean field, these dynamics can be described by the moment system~\parencite{zschaler2012adaptive,demirel2014momentclosure}
\begin{equation}
	\left\lbrace
	\begin{split}
		\dot{\tbracket{\A}} &= 0 \\
		\dot{\tbracket{\B}} &= 0 \\
		\dot{\tbracket{\A\A}} &= \frac{1}{2} \tbracket{\A\B} + \frac{1-p}{2} \tparenth{2\,\tbracket{\A\B\A} - \tbracket{\A\A\B}} \\
		\dot{\tbracket{\A\B}} &= -\tbracket{\A\B} - \frac{1-p}{2} \tparenth{2\,\tbracket{\A\B\A} - \tbracket{\A\A\B} - \tbracket{\A\B\B} + 2\,\tbracket{\B\A\B}} \\
		\dot{\tbracket{\B\B}} &= \frac{1}{2} \tbracket{\A\B} + \frac{1-p}{2} \tparenth{2\,\tbracket{\B\A\B} - \tbracket{\A\B\B}} \\
	\end{split}
	\right.
	\label{eq:general-adaptive-voter-model}
\end{equation}
where the rewiring probability $p$ is strictly contained in the unit interval and time is measured in units such that the activation rate is $1$.
Similarly as in the mean-field moment system of the SIS epidemic model, one notes that $\dot{\tbracket{\A\A}} + \dot{\tbracket{\A\B}} + \dot{\tbracket{\B\B}} = 0$ so that the total number of links $M$ is conserved under the dynamics. Moreover, $\dot{\tbracket{\A}} = 0$ and $\dot{\tbracket{\B}} = 0$ so that not only the total number of individuals but also the numbers of individuals with opinions $\A$ and $\B$ are conserved. Hence, the system reduces to the equivalent system
\begin{equation}
	\begin{split}
		&\left\lbrace
		\begin{split}
			\dot{\tbracket{\A\A}} &= \frac{1}{2} \tbracket{\A\B} + \frac{1-p}{2} \tparenth{2\,\tbracket{\A\B\A} - \tbracket{\A\A\B}} \\
			\dot{\tbracket{\B\B}} &= \frac{1}{2} \tbracket{\A\B} + \frac{1-p}{2} \tparenth{2\,\tbracket{\B\A\B} - \tbracket{\A\B\B}} \\
		\end{split}
		\right.
		\\[0.5cm]
		&\quad \text{\parbox{\linewidth-1.5cm}{subject to $\tbracket{\A} = \tbracket{\A}_{0}$, $\tbracket{\B} = \tbracket{\B}_{0}$ and $\tbracket{\A\A} + \tbracket{\A\B} + \tbracket{\B\B} = M$.}}
	\end{split}
	\label{eq:reduced-general-adaptive-voter-model}
\end{equation}

Upon variation of the rewiring probability $p$, the adaptive voter model exhibits a transcritical bifurcation in the dynamic variable $\tbracket{\A\B} = M - \tbracket{\A\A} - \tbracket{\B\B}$, corresponding to the number of \enquote{active} links (Fig.~\ref{fig:bifurcation-signatures}b). Whereas below the bifurcation point, the system tends to a dynamic equilibrium with a non-vanishing number of active links ($\tbracket{\A\B} > 0$), above that point, the equilibrium with a vanishing number of active links ($\tbracket{\A\B} = 0$) becomes stable. To reach this state, the network breaks apart into several clusters, and in each individual cluster consensus is formed. Overall, only non-active links remain. However, their composition is undetermined, so that in phase space this equilibrium constitutes a manifold. Crucially, this entails the equilibrium being degenerate so that standard bifurcation results are no longer applicable.

For the adaptive voter model, a moment closure of order $2$ is a mapping $H$ with $\tparenth{\tbracket{\A\A\B}, \tbracket{\A\B\A}, \tbracket{\B\A\B}, \tbracket{\A\B\B}} = H\tof{\tbracket{\A\A}, \tbracket{\B\B}}$. Alternatively, since the system effectively only depends on differences of moments of order $3$, we may simply write it as $\tparenth{2\,\tbracket{\A\B\A} - \tbracket{\A\A\B}, 2\,\tbracket{\B\A\B} - \tbracket{\A\B\B}} = H\tof{\tbracket{\A\A}, \tbracket{\B\B}} =: \tparenth{H^{\tparenth{2\,\tbracket{\A\B\A} - \tbracket{\A\A\B}}}, H^{\tparenth{2\,\tbracket{\B\A\B} - \tbracket{\A\B\B}}}}\linebreak\tof{\tbracket{\A\A}, \tbracket{\B\B}}$.
Then for such a closure relation, we have the following result.

\begin{customtheorem}{II}
	\label{thm:adaptive-voter-model-order-2-closure}
	Assume that $H$ is rational and that $H\tof{\tbracket{\A\A}, \tbracket{\B\B}} = 0$ whenever $\tbracket{\A\A} + \tbracket{\B\B} = M$. Then $H$ can be factorised so that $H\tof{\tbracket{\A\A}, \tbracket{\B\B}} = \tparenth{M - \tbracket{\A\A} - \tbracket{\B\B}} \, \tilde{H}\tof{\tbracket{\A\A}, \tbracket{\B\B}}$.
	
	Moreover, suppose that there exists $\theta_{*} \in \interval[open]{0}{1}$ such that
	\begin{equation*}
		-\frac{1}{1 - p_{*}} = \tilde{H}^{\tparenth{2\,\tbracket{\A\B\A} - \tbracket{\A\A\B}}}\tof{\theta_{*} M, \tparenth{1-\theta_{*}} M} = \tilde{H}^{\tparenth{2\,\tbracket{\B\A\B} - \tbracket{\A\B\B}}}\tof{\theta_{*} M, \tparenth{1-\theta_{*}} M} < -1
	\end{equation*}
	and that $\tilde{H}$ is continuously differentiable in a neighbourhood around the point $\tparenth{\theta_{*} M, \tparenth{1-\theta_{*}} M}$. Then, if
	\begin{equation*}
		\det{D_{\tparenth{\tbracket{\A\A}, \tbracket{\B\B}}} \tilde{H}\tof{\theta_{*} M, \tparenth{1-\theta_{*}} M}} \neq 0
	\end{equation*}
	and
	\begin{equation*}
		\trace{\begin{psmallmatrix} +1 & -1 \\ -1 & +1 \end{psmallmatrix} D_{\tparenth{\tbracket{\A\A}, \tbracket{\B\B}}} \tilde{H}\tof{\theta_{*} M, \tparenth{1-\theta_{*}} M}} \neq 0 \mathpunctuation{,}
	\end{equation*}
	the closed system exhibits a degenerate transcritical bifurcation at $p = p_{*}$ and at this point $\tparenth{\tbracket{\A\A}, \tbracket{\B\B}} = \tparenth{\theta_{*} M, \tparenth{1-\theta_{*}} M}$.
	
	In particular, provided that
	\begin{equation*}
		\det{ D_{\tparenth{\tbracket{\A\A}, \tbracket{\B\B}}} \tilde{H}\tof{\theta_{*} M, \tparenth{1-\theta_{*}} M} } > 0
	\end{equation*}
	and
	\begin{equation*}
		\trace{ D_{\tparenth{\tbracket{\A\A}, \tbracket{\B\B}}} \tilde{H}\tof{\theta_{*} M, \tparenth{1-\theta_{*}} M} } \trace{ \begin{psmallmatrix} +1 & -1 \\ -1 & +1 \end{psmallmatrix} D_{\tparenth{\tbracket{\A\A}, \tbracket{\B\B}}} \tilde{H}\tof{\theta_{*} M, \tparenth{1-\theta_{*}} M} } > 0 \mathpunctuation{,}
	\end{equation*}
	there is an exchange of stability at the bifurcation, and the bifurcation itself is supercritical (subcritical) if
	\begin{equation*}
		\trace{D_{\tparenth{\tbracket{\A\A}, \tbracket{\B\B}}} \tilde{H}\tof{\theta_{*} M, \tparenth{1-\theta_{*}} M}} \overset{\tof{>}}{<} 0 \mathpunctuation{.}
	\end{equation*}
\end{customtheorem}

\begin{remark*}
The condition that the closure vanishes whenever $\tbracket{\A\B} = 0$ is to ensure consistency and motivated through the network topology. Moreover, the condition that both components of the closure coincide at a point on the trivial manifold $\tbracket{\A\A} + \tbracket{\B\B} = M$ ensures the existence of the bifurcation.
\end{remark*}

As before, we defer the proof of this to the next section. In comparison to the bifurcation in the SIS epidemic model, the transcritical bifurcation scenario in the adaptive voter model is more complicated due to the degeneracy of the trivial equilibrium. Yet, the overall strategy still works, showing that it can be effectively adapted. Furthermore, it is worthwhile to note here that we have not made any assumptions about symmetries of the closure relation. However, the adaptive voter model is symmetric with respect to the exchange of $\A$ and $\B$. While the symmetry does not pertain to preserving the bifurcation, in practice, ideally, one might want a closure relation to also preserve it. A constraint ensuring this would then further restrict the set of admissible closure relations.

We can now apply Theorem~\ref{thm:adaptive-voter-model-order-2-closure} to the closure relations that have been previously proposed and used in the context of the adaptive voter model. The most prominent one consists of the relations $\tbracket{\A\A\B} = 2 \frac{\kappa_{\A}}{\tbracket{\A}} \tbracket{\A\A} \tbracket{\A\B}$, $2 \tbracket{\A\B\A} = \frac{\kappa_{\B}}{\tbracket{\B}} \tbracket{\A\B}^{2}$, $2 \tbracket{\B\A\B} = \frac{\kappa_{\A}}{\tbracket{\A}} \tbracket{\A\B}^{2}$, and $\tbracket{\A\B\B} = 2 \frac{\kappa_{\B}}{\tbracket{\B}} \tbracket{\B\B} \tbracket{\A\B}$ for strictly positive constants $\kappa_{\A}$ and $\kappa_{\B}$~\parencite{demirel2014momentclosure}. Then, $\tilde{H}\tof{\tbracket{\A\A}, \tbracket{\B\B}} = \tparenth{ \frac{\kappa_{\B}}{\tbracket{\B}} \tbracket{\A\B} - 2 \frac{\kappa_{\A}}{\tbracket{\A}} \tbracket{\A\A}, \frac{\kappa_{\A}}{\tbracket{\A}} \tbracket{\A\B} - 2 \frac{\kappa_{\B}}{\tbracket{\B}} \tbracket{\B\B} }$, $\theta_{*} = \frac{\frac{\kappa_{\B}}{\tbracket{\B}}}{\frac{\kappa_{\A}}{\tbracket{\A}} + \frac{\kappa_{\B}}{\tbracket{\B}}}$, and $p_{*} = 1 - \frac{1}{2 M} \tparenth{\frac{\tbracket{\A}}{\kappa_{\A}} + \frac{\tbracket{\B}}{\kappa_{\B}}}$ assuming that $\frac{\tbracket{\A}}{\kappa_{\A}} + \frac{\tbracket{\B}}{\kappa_{\B}} < 2 M$. The latter is usually satisfied when the network is not too sparse so that the random-graph assumption $\kappa_{\A}$, $\kappa_{\B} \approx 1$ holds true~\parencite{demirel2014momentclosure}. Furthermore,
$\det{ D_{\tparenth{\tbracket{\A\A}, \tbracket{\B\B}}} \tilde{H}\tof{\theta_{*} M, \tparenth{1-\theta_{*}} M} } = 2 \tparenth{\frac{\kappa_{\A}}{\tbracket{\A}} + \frac{\kappa_{\B}}{\tbracket{\B}}}^{2} > 0$,
$\trace{\begin{psmallmatrix} +1 & -1 \\ -1 & +1 \end{psmallmatrix} D_{\tparenth{\tbracket{\A\A}, \tbracket{\B\B}}} \tilde{H}\tof{\theta_{*} M, \tparenth{1-\theta_{*}} M} } = -2 \tparenth{\frac{\kappa_{\A}}{\tbracket{\A}} + \frac{\kappa_{\B}}{\tbracket{\B}}} < 0$,
$\trace{ D_{\tparenth{\tbracket{\A\A}, \tbracket{\B\B}}} \tilde{H}\tof{\theta_{*} M, \tparenth{1-\theta_{*}} M} } \times \, \trace{ \begin{psmallmatrix} +1 & -1 \\ -1 & +1 \end{psmallmatrix} D_{\tparenth{\tbracket{\A\A}, \tbracket{\B\B}}} \tilde{H}\tof{\theta_{*} M, \tparenth{1-\theta_{*}} M} } = 6 \tparenth{\frac{\kappa_{\A}}{\tbracket{\A}} + \frac{\kappa_{\B}}{\tbracket{\B}}}^{2} > 0$,
and so that since $\trace{ D_{\tparenth{\tbracket{\A\A}, \tbracket{\B\B}}} \tilde{H}\tof{\theta_{*} M, \tparenth{1-\theta_{*}} M} } = -3 \tparenth{\frac{\kappa_{\A}}{\tbracket{\A}} + \frac{\kappa_{\B}}{\tbracket{\B}}} < 0$, this closure indeed yields a supercritical transcritical bifurcation.

In summary, through Theorems~\ref{thm:sis-epidemic-model-order-1-closure}, \ref{thm:sis-epidemic-model-order-2-closure}, and \ref{thm:adaptive-voter-model-order-2-closure}, we now have rigorous and systematic evidence showing why certain closures for the SIS epidemic and adaptive voter model work well empirically and correctly reproduce a dynamical feature seen in direct large-scale network simulations.

\section{Proofs of Theorems~\ref{thms:sis-epidemic-model-closure} and \ref{thm:adaptive-voter-model-order-2-closure}}

In this section, we will provide the proofs for the main results of the previous section, namely Theorems~\ref{thm:sis-epidemic-model-order-1-closure}, \ref{thm:sis-epidemic-model-order-2-closure}, and \ref{thm:adaptive-voter-model-order-2-closure}, and we will start with a result that allows us to factorise rational closure relations that vanish on a hyperplane.

\begin{lemma}
	\label{lem:hyperplane-polynomial-factorisation}
	Suppose that $P$ is a polynomial in $d$ variables such that it vanishes on the hyperplane $\tset{x \in \reals^{d} \suchthat \tinner{a}{x} = M}$ for some $0 \neq a \in \reals^{d}$ and $M \in \reals$. Then $M - \tinner{a}{x}$ divides $P\tof{x}$ so that $P\tof{x} = \tparenth{M - \tinner{a}{x}} \tilde{P}\tof{x}$, where $\tilde{P}$ is again a polynomial in $d$ variables.
\end{lemma}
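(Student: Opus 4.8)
The plan is to prove the divisibility by carrying out a polynomial long division of $P$ by the linear polynomial $L\tof{x} := M - \tinner{a}{x}$ and then showing that the resulting remainder is the zero polynomial; the factor $\tilde{P}$ will then simply be the quotient. First I would use the hypothesis $a \neq 0$ to fix an index $i$ with $a_{i} \neq 0$, and after relabelling the coordinates assume $i = 1$. The point of this choice is that it lets me treat the problem as a division problem in a single distinguished variable.

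Next I would regard both $P$ and $L$ as polynomials in $x_{1}$ with coefficients in the ring $\reals\tbracket{x_{2}, \ldots, x_{d}}$. The key structural observation is that, in this view, $L$ has degree $1$ in $x_{1}$ with leading coefficient $-a_{1}$, which is a nonzero real scalar and hence a unit in $\reals\tbracket{x_{2}, \ldots, x_{d}}$. Therefore the usual division algorithm applies and yields a representation $P\tof{x} = Q\tof{x}\, L\tof{x} + R$, where $Q$ is a polynomial in $d$ variables and the remainder $R$ has degree $0$ in $x_{1}$, i.e. $R \in \reals\tbracket{x_{2}, \ldots, x_{d}}$ does not depend on $x_{1}$ at all.

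It then remains to show $R = 0$, and this is where the hypothesis that $P$ vanishes on the hyperplane enters. For any choice of $\tparenth{x_{2}, \ldots, x_{d}} \in \reals^{d-1}$, I would set $x_{1} := \frac{1}{a_{1}}\tparenth{M - \sum_{j \geq 2} a_{j} x_{j}}$, so that $\tinner{a}{x} = M$ and hence $L\tof{x} = 0$; the hypothesis gives $P\tof{x} = 0$, and the division identity collapses to $0 = R$. Since $R$ is independent of $x_{1}$, this shows that the $(d-1)$-variable polynomial $R$ vanishes at every point of $\reals^{d-1}$, and because $\reals$ is infinite, a polynomial that vanishes on all of $\reals^{d-1}$ must be identically zero. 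Consequently $P\tof{x} = Q\tof{x}\, L\tof{x} = \tparenth{M - \tinner{a}{x}}\, Q\tof{x}$, and setting $\tilde{P} := Q$ completes the argument. The only real subtlety worth flagging is the passage from ``vanishing on the hyperplane'' to ``remainder identically zero'': this relies on the affine hyperplane projecting surjectively onto the coordinate subspace $\reals^{d-1}$ (guaranteed precisely by $a_{1} \neq 0$) together with the infinitude of $\reals$; everything else is routine one-variable division over the coefficient ring.
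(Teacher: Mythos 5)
Your proof is correct, but it takes a genuinely different route from the paper. You perform Euclidean division in the distinguished variable: viewing $P$ and $L\tof{x} = M - \langle a, x\rangle$ as elements of $\reals\tbracket{x_{2}, \ldots, x_{d}}\tbracket{x_{1}}$, you exploit that $L$ has unit leading coefficient $-a_{1}$ to write $P = Q\,L + R$ with $R$ free of $x_{1}$, and then you kill $R$ by evaluating on the hyperplane, using that it projects surjectively onto $\reals^{d-1}$ (precisely because $a_{1} \neq 0$) together with the fact that a polynomial vanishing on all of $\reals^{d-1}$ is identically zero. The paper instead first treats the special case $a = e_{1}$, $M = 0$ by direct inspection of the monomial expansion (every monomial with $\alpha_{1} = 0$ must have zero coefficient, so $x_{1}$ factors out visibly), and then reduces the general case to this one via an invertible affine change of variables $\phi\tof{x} = M e_{1} - \transpose{A}x$ with $A e_{1} = a$, transporting the factorisation back through $\phi$. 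The ingredients are ultimately the same --- both arguments implicitly rest on the identity-theorem fact that a polynomial vanishing identically over an infinite field has zero coefficients --- but the mechanisms differ: your division-with-remainder argument needs only a relabelling of one coordinate rather than a full linear change of basis, makes the key input (infinitude of $\reals$) explicit, and works verbatim over any infinite field, indeed over any infinite integral domain in which $a_{1}$ is a unit; the paper's transformation method avoids the division algorithm altogether and makes the divisibility transparent at the level of monomials, at the modest cost of composing with an affine map and checking that polynomiality and the vanishing hypothesis are preserved under it. Both are complete and elementary; yours is arguably the more self-contained in that it never leaves the original coordinates.
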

\begin{proof}
	We first consider the special case where $a = e_{1}$ and $M = 0$, i.e. $P$ vanishes whenever $x_{1} = 0$. Then, writing $P\tof{x} = \sum_{\alpha \in \naturals_{0}^{d}} c_{\alpha} x^{\alpha}$ with $c_{\alpha} = 0$ for all but finitely many $\alpha$, where in the conventional notation for multiindices $\alpha \in \naturals_{0}^{d}$ ($d \in \naturals$), $\xi^{\alpha} := \prod_{i = 1}^{d} \xi_{i}^{\alpha_{i}}$ for every vector $\xi \in \reals^{d}$, we necessarily have that $c_{\alpha} = 0$ for every $\alpha$ with $\alpha_{1} = 0$. Hence, $P\tof{x} = \sum_{\stack{\alpha \in \naturals_{0}^{d}}{\alpha_{1} > 0}} c_{\alpha} x^{\alpha} = x_{1} \sum_{\stack{\alpha \in \naturals_{0}^{d}}{\alpha_{1} > 0}} c_{\alpha} x^{\alpha - e_{1}} = x_{1} \tilde{P}\tof{x}$, where $\tilde{P}$ is again a polynomial.
	
	For the general case, let $A$ be a bijective linear operator such that $A e_{1} = a$, and consider the map $\phi: \reals^{d} \to \reals^{d}$ with $\phi\tof{x} = M e_{1} - \transpose{A}x$.
	Since $\phi$ is an affine linear map, $P \circ \inverse{\phi}$ is also a polynomial. Moreover, $\tinner{a}{\inverse{\phi}\tof{x}} = \tinner{\inverse{A}a}{M e_{1} - x} = M - x_{1}$ so that $\tinner{a}{\inverse{\phi}\tof{x}} = M$ whenever $x_{1} = 0$, and therefore also $\tparenth{P \circ \inverse{\phi}}\tof{x} = 0$ by assumption. Hence, from the above, we conclude that $\tparenth{P \circ \inverse{\phi}}\tof{x} = x_{1} \widetilde{\tparenth{P \circ \inverse{\phi}}}\tof{x}$ and $P\tof{x} = \tparenth{P \circ \inverse{\phi}}\tof{\phi\tof{x}} = \phi\tof{x}_{1} \widetilde{\tparenth{P \circ \inverse{\phi}}}\tof{\phi\tof{x}} = \tparenth{M - \tinner{a}{x}} \widetilde{\tparenth{P \circ \inverse{\phi}}}\tof{\phi\tof{x}}$.
\end{proof}

\subsection{Proof of Theorem~\ref{thms:sis-epidemic-model-closure}}

A powerful result to prove the existence of bifurcations in high-dimensional systems is the Crandall--Rabinowitz Theorem. Since the proof of Theorem~\ref{thm:sis-epidemic-model-order-2-closure} particularly relies on it, we recall it here for convenience.

\begin{theorem*}[Crandall--Rabinowitz~{\parencite[Theorem~I.5.1 \& Section~I.6]{kielhoefer2012bifurcation}}]
	\label{thm:crandall-rabinowitz}
	Let $X$ and $Y$ be Banach spaces, $U \subset X$ and $\Lambda \subset \reals$ be open, and $F: U \times \Lambda \to Y$ be twice continuously differentiable. Assume that $F\tof{0,\lambda} = 0$ for all $\lambda \in \Lambda$ and that $F\tof{\slot,\lambda_{*}}$ for some $\lambda_{*} \in \Lambda$ is a (non-linear) Fredholm operator with $\dim\kernel{D_{x}F\tof{0,\lambda_{*}}} = 1 = \codim\range{D_{x}F\tof{0,\lambda_{*}}}$. Furthermore, if $\kernel{D_{x}F\tof{0,\lambda_{*}}} = \linearspan\tset{\hat{v}_{0}}$ and $Y / \range{D_{x}F\tof{0,\lambda_{*}}} = \linearspan\tset{\hat{v}^{*}_{0}}$ with $\tnorm{\hat{v}_{0}} = 1 = \tnorm{\hat{v}_{0}^{*}}$, suppose that $\hat{v}'_{0}$ is the functional such that $\hat{v}'_{0}\tbracket{\hat{v}^{*}_{0}} = 1$ and $\hat{v}'_{0}\tbracket{y} = 0$ for all $y \in \range{D_{x}F\tof{0,\lambda_{*}}}$ and that $D^{2}_{x \lambda}F\tof{0,\lambda_{*}}\tbracket{\hat{v}_{0}} \not\in \range{D_{x}F\tof{0,\lambda_{*}}}$, i.e. $\hat{v}'_{0}\tbracket{D^{2}_{x \lambda}F\tof{0,\lambda_{*}}\tbracket{\hat{v}_{0}}} \neq 0$.

	Then there exists a non-trivial continuously differentiable curve through $\tparenth{0, \lambda_{*}}$,
	\begin{equation*}
		\tset{\tparenth{x\tof{s}, \lambda\tof{s}} \suchthat s \in \tball{\delta}{0}, \tparenth{x\tof{0}, \lambda\tof{0}} = \tparenth{0, \lambda_{*}}}
	\end{equation*}
	for some $\delta$ sufficiently small, such that $F \equiv 0$ on that curve. Moreover, all solutions $\tparenth{x, \lambda}$ of $F\tof{x, \lambda} = 0$ in a neighbourhood of $\tparenth{0, \lambda_{*}}$ are on either the non-trivial curve or the trivial curve $\tset{\tparenth{0, \lambda_{*} + s} \suchthat s \in \tball{\delta}{0}}$.

	More specifically, we have that
	\begin{equation*}
		x'\tof{0} = \hat{v}_{0} \quad \text{and} \quad \lambda'\tof{0} = -\frac{1}{2} \frac{\hat{v}_{0}'\bracket{D^{2}_{xx}F\tof{0,\lambda_{*}}\tbracket{\hat{v}_{0},\hat{v}_{0}}}}{\hat{v}_{0}'\bracket{D^{2}_{x \lambda}F\tof{0,\lambda_{*}}\tbracket{\hat{v}_{0}}}}
	\end{equation*}
	where $D_{x}F\tof{0,\lambda_{*}}: X / \ker{D_{x}F\tof{0,\lambda_{*}}} \to \range{D_{x}F\tof{0,\lambda_{*}}}$ is an isomorphism and thus invertible so that the term $\tparenth{\varid - P}\inverse{D_{x}F\tof{0,\lambda_{*}}}\tparenth{\varid - Q}$ is well-defined with the projections $P: X \to \ker{D_{x}F\tof{0,\lambda_{*}}}$ and $Q: Y \to Y / \range{D_{x}F\tof{0,\lambda_{*}}}$.
	
	In particular, the point $\tparenth{0,\lambda_{*}}$, where the non-trivial and the trivial solution curves intersect, is called the bifurcation point. Moreover, it corresponds to a transcritical bifurcation if $\lambda'\tof{0} \neq 0$.
\end{theorem*}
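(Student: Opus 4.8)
The plan is to reduce the infinite-dimensional equation $F\tof{x,\lambda} = 0$ to a single scalar equation by Lyapunov--Schmidt reduction and then to analyse that scalar equation with the implicit function theorem. Write $L := D_{x}F\tof{0,\lambda_{*}}$ for the linearisation. Since $L$ is Fredholm with $\dim\kernel{L} = 1 = \codim\range{L}$, the kernel $\kernel{L} = \linearspan\tset{\hat{v}_{0}}$ has a closed complement $X_{0}$, so $X = \kernel{L} \oplus X_{0}$, and $\range{L}$ is closed with a one-dimensional complement $Y_{0}$, so $Y = \range{L} \oplus Y_{0}$. Let $P: X \to \kernel{L}$ and $Q: Y \to Y_{0}$ be the associated continuous projections (identifying $Y_{0}$ with $Y/\range{L}$). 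First I would split $F\tof{x,\lambda} = 0$ into the pair $\tparenth{\varid - Q}F\tof{x,\lambda} = 0$ and $Q F\tof{x,\lambda} = 0$, and write every $x$ near $0$ uniquely as $x = s\hat{v}_{0} + w$ with $s \in \reals$ and $w = \tparenth{\varid - P}x \in X_{0}$.

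Next I would solve the auxiliary equation. Setting $G\tof{s,w,\lambda} := \tparenth{\varid - Q}F\tof{s\hat{v}_{0} + w, \lambda}$, one has $G\tof{0,0,\lambda} = 0$ for all $\lambda$ because $F\tof{0,\lambda} = 0$, while $D_{w}G\tof{0,0,\lambda_{*}} = \tparenth{\varid - Q}L|_{X_{0}}$ is by construction a continuous bijection of $X_{0}$ onto $\range{L}$, hence an isomorphism. The implicit function theorem then gives a twice continuously differentiable $w = w\tof{s,\lambda}$ near $\tparenth{0,\lambda_{*}}$ with $w\tof{0,\lambda_{*}} = 0$ solving $G \equiv 0$, and substitution yields the reduced bifurcation equation $\Phi\tof{s,\lambda} := \hat{v}_{0}'\bracket{F\tof{s\hat{v}_{0} + w\tof{s,\lambda}, \lambda}}$, which is scalar since $\hat{v}_{0}'$ is faithful on $Y_{0}$ and kills $\range{L}$, so $\Phi = 0$ is equivalent to $Q F = 0$.

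Now I would exploit the trivial branch. Since $F\tof{0,\lambda} = 0$ forces $w\tof{0,\lambda} = 0$ by uniqueness, we get $\Phi\tof{0,\lambda} = 0$ for all $\lambda$, and Hadamard's lemma factors $\Phi\tof{s,\lambda} = s\,\psi\tof{s,\lambda}$ with $\psi\tof{s,\lambda} = \int_{0}^{1} \partial_{s}\Phi\tof{ts,\lambda}\,\mathrm{d}t$ continuously differentiable; the non-trivial solutions are exactly the zeros of $\psi$. Differentiating the auxiliary equation in $s$ and using $L\hat{v}_{0} = 0$ together with $L|_{X_{0}}$ injective shows $\partial_{s}w\tof{0,\lambda_{*}} = 0$; hence $\psi\tof{0,\lambda_{*}} = \hat{v}_{0}'\bracket{L\hat{v}_{0}} = 0$ and, after differentiating in $\lambda$, $\partial_{\lambda}\psi\tof{0,\lambda_{*}} = \hat{v}_{0}'\bracket{D^{2}_{x\lambda}F\tof{0,\lambda_{*}}\tbracket{\hat{v}_{0}}} \neq 0$ by the transversality hypothesis. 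A final implicit function theorem applied to $\psi\tof{s,\lambda} = 0$ solves $\lambda = \lambda\tof{s}$ near $s = 0$ with $\lambda\tof{0} = \lambda_{*}$, and $x\tof{s} := s\hat{v}_{0} + w\tof{s,\lambda\tof{s}}$ is the asserted non-trivial curve; that all nearby solutions lie on it or on the trivial curve follows from the equivalence of $F = 0$ with the reduced equation.

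It remains to extract the derivatives. From $\partial_{s}w\tof{0,\lambda_{*}} = 0$ and $\partial_{\lambda}w\tof{0,\lambda_{*}} = 0$ one reads off $x'\tof{0} = \hat{v}_{0}$, and differentiating $\psi\tof{s,\lambda\tof{s}} = 0$ gives $\lambda'\tof{0} = -\partial_{s}\psi\tof{0,\lambda_{*}}/\partial_{\lambda}\psi\tof{0,\lambda_{*}}$, where $\partial_{s}\psi\tof{0,\lambda_{*}} = \tfrac{1}{2}\partial_{s}^{2}\Phi\tof{0,\lambda_{*}} = \tfrac{1}{2}\hat{v}_{0}'\bracket{D^{2}_{xx}F\tof{0,\lambda_{*}}\tbracket{\hat{v}_{0},\hat{v}_{0}}}$, recovering the stated formula. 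The main obstacle is the bookkeeping in the reduction: one must verify that $\tparenth{\varid - Q}L|_{X_{0}}$ is genuinely invertible and, crucially, that the chain-rule contributions involving $w$ drop out at the bifurcation point (because $\partial_{s}w\tof{0,\lambda_{*}} = 0$ and $L$ annihilates them under $\hat{v}_{0}'$), so that only the intrinsic second derivatives of $F$ along $\hat{v}_{0}$ survive in $\psi$. Finally, $\lambda\tof{s} = \lambda_{*} + \lambda'\tof{0}\,s + \O\tof{s^{2}}$ with $\lambda'\tof{0} \neq 0$ means the non-trivial branch crosses $\lambda = \lambda_{*}$ transversally as $s$ passes through $0$, which is exactly the transcritical configuration.
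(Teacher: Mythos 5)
Your proposal is correct and follows essentially the same route as the paper's source: the paper does not prove this theorem itself but quotes it from Kielh\"ofer, where the argument is exactly your Lyapunov--Schmidt reduction --- solve the auxiliary equation $(\varid - Q)F = 0$ for $w\tof{s,\lambda}$ by the implicit function theorem, use $F\tof{0,\lambda}=0$ to factor the reduced scalar equation as $\Phi\tof{s,\lambda} = s\,\psi\tof{s,\lambda}$, and apply the implicit function theorem to $\psi$ via the transversality hypothesis $\hat{v}_{0}'\tbracket{D^{2}_{x\lambda}F\tof{0,\lambda_{*}}\tbracket{\hat{v}_{0}}} \neq 0$. Your bookkeeping is also sound: $w\tof{0,\lambda}\equiv 0$ gives $\partial_{\lambda}w\tof{0,\lambda_{*}}=0$, injectivity of $L\vert_{X_{0}}$ gives $\partial_{s}w\tof{0,\lambda_{*}}=0$, and together with $\hat{v}_{0}'$ annihilating $\range{L}$ these yield precisely the stated formulas for $x'\tof{0}$ and $\lambda'\tof{0}$.
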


In the situation where the spaces are Hilbert spaces, the statement simplifies. More specifically, $X/\ker{D_{x}F\tof{0,\lambda_{*}}} \cong \orthogonal{\ker{D_{x}F\tof{0,\lambda_{*}}}}$ and $Y / \range{D_{x}F\tof{0,\lambda_{*}}} \cong \orthogonal{\range{D_{x}F\tof{0,\lambda_{*}}}}$ since both $\ker{D_{x}F\tof{0,\lambda_{*}}}$ and $\range{D_{x}F\tof{0,\lambda_{*}}}$ are closed subspaces, as $D_{x}F\tof{0,\lambda_{*}}$ is bounded and $D_{x}F\tof{0,\lambda_{*}}$ is Fredholm, respectively. Consequently, $P = \tprojector{\hat{v}_{0}}{\hat{v}_{0}}$, $Q = \tprojector{\hat{v}^{*}_{0}}{\hat{v}^{*}_{0}}$, and $\hat{v}'_{0} = \tinner{\hat{v}^{*}_{0}}{\slot}$ so that, in particular, $\hat{v}'_{0}\tbracket{Q \slot} = \tinner{\hat{v}_{0}^{*}}{\slot}$, and
\begin{equation*}
	\lambda'\tof{0} = -\frac{1}{2} \frac{\inner{\hat{v}_{0}^{*}}{D^{2}_{xx}F\tof{0,\lambda_{*}}\tbracket{\hat{v}_{0},\hat{v}_{0}}}}{\inner{\hat{v}_{0}^{*}}{D^{2}_{x \lambda}F\tof{0,\lambda_{*}}\tbracket{\hat{v}_{0}}}} \mathpunctuation{.}
\end{equation*}

In the following, we will only prove Theorem~\ref{thm:sis-epidemic-model-order-2-closure} since Theorem~\ref{thm:sis-epidemic-model-order-1-closure} can be proved in essentially the same way.

\begin{proof}[Proof of Theorem~\ref{thm:sis-epidemic-model-order-2-closure}]
	Assuming that the closure relation $\tparenth{\tbracket{\S\S\I}, \tbracket{\I\S\I}} = H\tof{\tbracket{\I}, \tbracket{\S\I}, \tbracket{\I\I}}$ is rational and that $H\tof{\tbracket{\I}, \tbracket{\S\I}, \tbracket{\I\I}} = 0$ whenever $\tbracket{\S\I} = 0$, due to Lemma~\ref{lem:hyperplane-polynomial-factorisation}, we have that $H\tof{\tbracket{\I}, \tbracket{\S\I}, \tbracket{\I\I}} = \tbracket{\S\I} \tilde{H}\tof{\tbracket{\I}, \tbracket{\S\I}, \tbracket{\I\I}}$, where $\tilde{H}\tof{\tbracket{\I}, \tbracket{\S\I}, \tbracket{\I\I}}$ is again rational.
	
	Applying the closure relation to system \eqref{eq:reduced-general-sis-epidemic-model}, we obtain the system
	\begin{equation}
		\dot{\Xi} = \underbrace{\begin{psmallmatrix} -1 & \rho & 0 \\ 0 & -\tparenth{1 + \rho} & 1 \\ 0 & 2 \rho & -2 \\ \end{psmallmatrix} \Xi + \rho \, \Xi_{2} \begin{psmallmatrix} 0 & 0 \\ 1 & -1 \\ 0 & 2 \\ \end{psmallmatrix} \tilde{H}\tof{\Xi}}_{=: F\tof{\Xi,\rho}} \quad \text{with $\Xi \equiv \begin{psmallmatrix} \tbracket{\I} \\ \tbracket{\S\I} \\ \tbracket{\I\I} \end{psmallmatrix}$.}
	\end{equation}
	
	Then $F\tof{0,\rho} = 0$, and for $\rho_{*} = \frac{1}{\tilde{H}_{1}\tof{0}}$ we have that $D_{\Xi}F\tof{0,\rho_{*}} = \begin{psmallmatrix} -1 & \frac{1}{\tilde{H}_{1}\tof{0}} & 0 \\ 0 & -\frac{1 + \tilde{H}_{2}\tof{0}}{\tilde{H}_{1}\tof{0}} & 1 \\ 0 & 2 \frac{1 + \tilde{H}_{2}\tof{0}}{\tilde{H}_{1}\tof{0}} & -2 \end{psmallmatrix}$ so that $\rank{D_{\Xi}F\tof{0,\rho_{*}}} = 2$ and $\dim\kernel{D_{\Xi}F\tof{0,\rho_{*}}} = 1 = \codim\range{D_{\Xi}F\tof{0,\lambda_{*}}}$. In particular,
	\begin{itemize}[label=--]
		\item $\kernel{D_{\Xi}F\tof{0,\rho_{*}}} = \linearspan\tset{\hat{v}_{0}} = \linearspan\set{\frac{1}{\sqrt{1 + \tilde{H}_{1}\tof{0}^{2} + \tparenth{1 + \tilde{H}_{2}\tof{0}}^{2}}} \begin{psmallmatrix} 1 \\ \tilde{H}_{1}\tof{0} \\ 1 + \tilde{H}_{2}\tof{0} \\ \end{psmallmatrix}}$ and
		\item $\range{D_{\Xi}F\tof{0, \rho_{*}}} = \linearspan\set{\begin{psmallmatrix} 1 \\ 0 \\ 0 \\ \end{psmallmatrix}, \frac{1}{\sqrt{5}} \begin{psmallmatrix} 0 \\ 1 \\ -2 \\ \end{psmallmatrix}}$ so that $\orthogonal{\range{D_{\Xi}F\tof{0,\rho_{*}}}} = \linearspan\tset{\hat{v}_{0}^{*}} := \linearspan\set{\frac{1}{\sqrt{5}} \begin{psmallmatrix} 0 \\ 2 \\ 1 \\ \end{psmallmatrix}}$.
	\end{itemize}
	
	Thus, $\tinner{\hat{v}_{0}^{*}}{D_{\rho \Xi}F\tof{0,\rho_{*}}\tbracket{\hat{v}_{0}}} = \frac{2 \tilde{H}_{1}\tof{0}^{2}}{\sqrt{5} \sqrt{1 + \tilde{H}_{1}\tof{0}^{2} + \tparenth{1 + \tilde{H}_{2}\tof{0}}^{2}}} > 0$ so that $D^{2}_{\rho \Xi}F\tof{0,\rho_{*}}\tbracket{\hat{v}_{0}} \not\in \range{D_{\Xi}F\tof{0, \rho_{*}}}$. Hence, by the Crandall--Rabinowitz Theorem, we conclude that there exists a non-trivial bifurcation curve with $\rho\tof{0} = \rho_{*}$ and
	\begin{equation}
		\rho'\tof{0} = -\frac{1}{2} \frac{\tinner{\hat{v}_{0}^{*}}{D^{2}_{\Xi\Xi}F\tof{0, \rho_{*}}\tbracket{\hat{v}_{0}, \hat{v}_{0}}}}{\tinner{\hat{v}_{0}^{*}}{D^{2}_{\rho \Xi}F\tof{0, \rho_{*}}\tbracket{\hat{v}_{0}}}} = - \frac{\partial_{1} \tilde{H}_{1}\tof{0} + \tilde{H}_{1}\tof{0} \partial_{2} \tilde{H}_{1}\tof{0} + \tparenth{1 + \tilde{H}_{2}\tof{0}} \partial_{3} \tilde{H}_{1}\tof{0}}{\tilde{H}_{1}\tof{0}^{2} \sqrt{1 + \tilde{H}_{1}\tof{0}^{2} + \tparenth{1 + \tilde{H}_{2}\tof{0}}^{2}}} \mathpunctuation{.}
	\end{equation}
	Moreover, this bifurcation is transcritical if $\rho'\tof{0} \neq 0$ which, in turn, is the case if and only if $\partial_{1} \tilde{H}_{1}\tof{0} + \tilde{H}_{1}\tof{0} \partial_{2} \tilde{H}_{1}\tof{0} + \tparenth{1 + \tilde{H}_{2}\tof{0}} \partial_{3} \tilde{H}_{1}\tof{0} \neq 0$.
	
	With regard to stability, we note that
	\begin{equation}
		\Re{\spectrum{D_{\Xi}F\tof{0,\rho}}} = \tset{-1, \frac{1}{2} \tparenth{\phi\tof{\rho} \pm \Re{\sqrt{\phi\tof{\rho}^{2} - 8 \tparenth{1 - \tilde{H}_{1}\tof{0} \rho}}}}}
	\end{equation}
	with $-\phi\tof{\rho} = 3 - \tparenth{\tilde{H}_{1}\tof{0} - \tparenth{1 + \tilde{H}_{2}\tof{0}}} \rho$.
	For $\rho > \rho_{*}$, $8 \tparenth{1 - \tilde{H}_{1}\tof{0} \rho} < 0$ and thus $\Re{\sqrt{\phi\tof{\rho}^{2} - 8 \tparenth{1 - \tilde{H}_{1}\tof{0} \rho}}} > \tabs{\phi\tof{\rho}}$ so that $0$ is always an unstable solution. Conversely, for $\rho < \rho_{*}$, $\Re{\sqrt{\phi\tof{\rho}^{2} - 8 \tparenth{1 - \tilde{H}_{1}\tof{0} \rho}}} < \tabs{\phi\tof{\rho}}$ so that $0$ is a stable solution if $\phi\tof{\rho_{*}} < 0$.
	
	Hence, assuming that $\phi\tof{\rho_{*}} = - \tparenth{2 + \frac{1 + \tilde{H}_{2}\tof{0}}{\tilde{H}_{1}\tof{0}}} < 0$ or $2 \tilde{H}_{1}\tof{0} + \tilde{H}_{2}\tof{0} + 1 > 0$, we finally conclude that the bifurcation is supercritical (subcritical) if $\rho'\tof{0} \overset{\tof{<}}{>} 0$ and therefore if $\partial_{1} \tilde{H}_{1}\tof{0} + \tilde{H}_{1}\tof{0} \partial_{2} \tilde{H}_{1}\tof{0} + \tparenth{1 + \tilde{H}_{2}\tof{0}} \partial_{3} \tilde{H}_{1}\tof{0} \overset{\tof{>}}{<} 0$, which then completes the proof.
\end{proof}

\subsection{Proof of Theorem~\ref{thm:adaptive-voter-model-order-2-closure}}

Rather than a standard transcritical bifurcation, the adaptive voter model exhibits a degenerate transcritical bifurcation. Whereas in the former, upon parameter variation, two equilibrium points collide, the latter shall refer to the situation where there exists a whole manifold of codimension $1$ of equilibria, as well as an equilibrium point that collide upon parameter variation~(Fig.~\ref{fig:degenerate-transcritical-bifurcation}).

\begin{figure}[tbp]
	\centering
	
	\includegraphics[page=3,trim={0 0.45cm 0 0},clip]{figures}
	
	\caption{\textbf{Schematic of a degenerate transcritical bifurcation.} In an open neighbourhood $U$, there exists a manifold of codimension $1$ of equilibria, $\Omega_{0}$, as well as an equilibrium point. Upon variation of the bifurcation parameter $\lambda$, the equilibrium point eventually collides with and crosses the manifold of equilibria. At the bifurcation, when $\lambda = \lambda_{*}$ and the bifurcation lies in the manifold, there occurs an exchange of stability.}
	\label{fig:degenerate-transcritical-bifurcation}
\end{figure}

A prototypical example of a dynamical system for modelling such bifurcations is
\begin{equation}
	\dot{x} = F\tof{x, \lambda} := \omega\tof{x} f\tof{x, \lambda}
	\label{eq:degenerate-bifurcation-prototype-dynamical-system}
\end{equation}
with $\omega: \reals^{d} \to \reals$ and $f: \reals^{d} \times \reals \to \reals^{d}$. This system has the zeros of $\omega$ as well as $f$ as equilibria, and under further conditions their interplay can indeed give rise to bifurcations as described above.

Before getting to that, we state a few auxiliary results that we will need in the following.

\begin{lemma}
	\label{lem:rank-1-projection-spectrum}
	Let $X$ be a Hilbert space, and consider the rank-$1$ projection $\tprojector{x'}{x}$ for $x$, $x' \in X$. Then, if $x \neq 0$, $x$ is an eigenvector with corresponding eigenvalue $\tinner{x'}{x}$. All the other eigenvalues are $0$ with eigenvectors in $\orthogonal{\linearspan\tset{x'}}$.
\end{lemma}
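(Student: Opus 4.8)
\emph{Proof proposal.} The plan is to work directly with the rank-$1$ operator $T := \tprojector{x'}{x}$, which acts on $y \in X$ by $T y = \tinner{x'}{y} x$. First I would verify the claimed eigenpair by the one-line computation $T x = \tinner{x'}{x} x$, which shows that $x$, being non-zero by assumption, is indeed an eigenvector of $T$ with eigenvalue $\tinner{x'}{x}$.

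Next I would identify the eigenspace associated with the eigenvalue $0$, which is exactly $\kernel T$. Since $x \neq 0$, the equation $T y = \tinner{x'}{y} x = 0$ holds if and only if $\tinner{x'}{y} = 0$, that is, if and only if $y \in \orthogonal{\linearspan\tset{x'}}$; hence the null space is precisely this (closed) subspace, which has codimension $1$ when $x' \neq 0$. To see that no further eigenvalues occur, I would suppose $T y = \mu y$ with $y \neq 0$ and $\mu \neq 0$. Then $\tinner{x'}{y} x = \mu y$, and dividing by $\mu$ forces $y = \tfrac{\tinner{x'}{y}}{\mu} x$, so $y$ is necessarily a non-zero scalar multiple of $x$ (were $\tinner{x'}{y} = 0$, we would get $y = 0$). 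Substituting $y = c x$ back into $T y = \tinner{x'}{x} (c x) = \tinner{x'}{x} y$ then yields $\mu = \tinner{x'}{x}$. Consequently the only possible non-zero eigenvalue is $\tinner{x'}{x}$, carried by $x$, while every eigenvector for any other eigenvalue lies in $\orthogonal{\linearspan\tset{x'}}$ with eigenvalue $0$, as claimed.

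I expect essentially no obstacle here, as the argument is a direct computation; the only point worth a remark is the degenerate case $\tinner{x'}{x} = 0$, in which $x$ itself lies in $\orthogonal{\linearspan\tset{x'}}$ and $T$ is nilpotent ($T^{2} = 0$), so that $0$ is the sole eigenvalue and the statement remains literally true. Finally, since $T$ is a rank-$1$, hence compact, operator, its non-zero spectrum consists solely of eigenvalues, so this description of the point spectrum captures all the spectral information that is needed for the subsequent application of the lemma.
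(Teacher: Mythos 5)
Your argument is correct and complete. The paper in fact states Lemma~\ref{lem:rank-1-projection-spectrum} without any proof, treating it as elementary, so there is no argument of the authors to diverge from: your direct verification that $T := \tprojector{x'}{x}$ satisfies $T x = \tinner{x'}{x}\, x$, that $\kernel T = \orthogonal{\linearspan\tset{x'}}$ when $x \neq 0$, and that any eigenvector with non-zero eigenvalue is forced to be a scalar multiple of $x$ (whence the eigenvalue equals $\tinner{x'}{x}$) is exactly the computation the authors left implicit. Your attention to the degenerate case $\tinner{x'}{x} = 0$, where $T$ is nilpotent and the statement still holds literally with $x$ an eigenvector for the eigenvalue $0$, is a welcome precision; the closing remark on compactness, while true, is not needed for the paper's purposes, since the lemma is only invoked in the finite-dimensional setting $X = \reals^{d}$ inside the proof of Lemma~\ref{lem:degenerate-equilibrium-stability}, where the point spectrum already exhausts the spectrum.
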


\begin{lemma}
	\label{lem:degenerate-equilibrium-stability}
	Let $f: \reals^{d} \to \reals^{d}$, and consider the dynamical system $\dot{x} = F\tof{x} := x_{1} f\tof{x}$. Then every $x_{0} \in \tset{0} \times \reals^{d-1}$ for which $f_{1}\tof{x_{0}} \neq 0$ is a non-hyperbolic equilibrium, that is, stable if $f_{1}\tof{x_{0}} < 0$ and unstable otherwise.
\end{lemma}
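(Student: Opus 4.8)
The plan is to verify that $x_{0}$ is an equilibrium, read off its linearisation with the help of Lemma~\ref{lem:rank-1-projection-spectrum}, and then---since the linearisation will turn out to be non-hyperbolic and hence inconclusive---settle the question of stability by a direct argument on the first coordinate. First, because $F\tof{x} = x_{1} f\tof{x}$ and $\tparenth{x_{0}}_{1} = 0$, we have $F\tof{x_{0}} = 0$, so $x_{0}$ is indeed an equilibrium. Differentiating componentwise gives $\partial_{x_{j}} F_{i}\tof{x} = \delta_{1j} f_{i}\tof{x} + x_{1} \partial_{x_{j}} f_{i}\tof{x}$, and at $x_{0}$ the second term vanishes, leaving $\partial_{x_{j}} F_{i}\tof{x_{0}} = \delta_{1j} f_{i}\tof{x_{0}}$. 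Thus $D F\tof{x_{0}}$ has $f\tof{x_{0}}$ as its only non-zero column, i.e. it is exactly the rank-$1$ operator $\tprojector{e_{1}}{f\tof{x_{0}}}$. Since $f_{1}\tof{x_{0}} \neq 0$ forces $f\tof{x_{0}} \neq 0$, Lemma~\ref{lem:rank-1-projection-spectrum} applies and yields $\spectrum{D F\tof{x_{0}}} = \tset{0, f_{1}\tof{x_{0}}}$, with $f\tof{x_{0}}$ the eigenvector for $f_{1}\tof{x_{0}}$ and the $(d-1)$-dimensional orthogonal complement of $e_{1}$ the eigenspace for $0$. In particular, for $d \geq 2$ the eigenvalue $0$ is present, so $x_{0}$ is non-hyperbolic.

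Because the linearisation is non-hyperbolic, it cannot by itself decide stability; indeed, the whole hyperplane $\tset{0} \times \reals^{d-1}$ consists of equilibria, so $x_{0}$ cannot be asymptotically stable, and the correct notion here is Lyapunov stability. The idea I would use is that the first coordinate decouples to leading order, $\dot{x}_{1} = x_{1} f_{1}\tof{x}$, so that with the candidate function $V\tof{x} = x_{1}^{2}$ one has $\dot{V} = 2 x_{1}^{2} f_{1}\tof{x}$, whose sign close to $x_{0}$ is governed by that of $f_{1}\tof{x_{0}}$.

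For $f_{1}\tof{x_{0}} < 0$, I would fix, by continuity, a ball $U$ around $x_{0}$ on which $f_{1} \leq -c < 0$ and $\norm{f} \leq C$. There $\dot{V} \leq -2c V$ yields $\tabs{x_{1}\tof{t}} \leq \tabs{x_{1}\tof{0}} \mathrm{e}^{-ct}$, so $x_{1}$ decays exponentially; since $\dot{x}_{j} = x_{1} f_{j}\tof{x}$ for $j \geq 2$, the transverse velocities are bounded by $C \tabs{x_{1}\tof{t}}$ and are therefore integrable, bounding the total transverse drift by $\tfrac{C}{c} \tabs{x_{1}\tof{0}}$. A standard first-exit argument then shows that, for $x\tof{0}$ close enough to $x_{0}$, the trajectory never leaves $U$ and remains within any prescribed distance of $x_{0}$, which is precisely Lyapunov stability. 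For $f_{1}\tof{x_{0}} > 0$ the reversed estimate gives $\tabs{x_{1}\tof{t}} \geq \tabs{x_{1}\tof{0}} \mathrm{e}^{ct}$ as long as the trajectory stays in the bounded set $U$; hence any trajectory with $x_{1}\tof{0} \neq 0$, however close to $x_{0}$, must leave $U$ in finite time, establishing instability.

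The main obstacle is the stability half of the statement. Since $x_{0}$ sits inside a manifold of equilibria, there is no contraction back to the point itself, and the delicate part is controlling the $d-1$ transverse coordinates, which move freely whenever $x_{1} \neq 0$. The resolution is that their motion is driven entirely by $x_{1}$, and the exponential decay of $x_{1}$ makes this drift integrable and therefore uniformly small; this integrability is exactly what upgrades the one-dimensional decay of $x_{1}$ into genuine Lyapunov stability of $x_{0}$ in all $d$ coordinates.
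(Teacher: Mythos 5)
Your proof is correct, but it takes a genuinely different route from the paper's. The two arguments share the first half: you both compute $DF\tof{x_{0}} = \tprojector{e_{1}}{f\tof{x_{0}}}$ and invoke Lemma~\ref{lem:rank-1-projection-spectrum} to read off the spectrum $\tset{0, f_{1}\tof{x_{0}}}$ and conclude non-hyperbolicity. For the stability claim, however, the paper performs a center-manifold reduction: it conjugates the linearisation into block-diagonal form, makes a power-series ansatz for the graph $h$ of the center manifold, shows inductively from the invariance equation that all Taylor coefficients of $h$ vanish (so the center manifold is locally the equilibrium hyperplane itself and the reduced dynamics are trivial, $\dot{\xi} = 0$), and then concludes via the reduction principle (Carr, Wiggins) that the stability of $x_{0}$ is decided by the sign of $f_{1}\tof{x_{0}}$. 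You instead argue directly with $V\tof{x} = x_{1}^{2}$: Grönwall gives exponential decay (respectively growth) of $\tabs{x_{1}}$ on a ball where $f_{1}$ has a definite sign, the transverse coordinates are slaved to $x_{1}$ with drift bounded by the integrable quantity $\tfrac{C}{c}\tabs{x_{1}\tof{0}}$, and a first-exit argument correctly closes the circularity of estimates that only hold while the trajectory remains in the ball; the unstable case via forced exit from a fixed neighbourhood is likewise sound, and your remark that only Lyapunov (not asymptotic) stability can hold, since $x_{0}$ sits in a hyperplane of equilibria, matches exactly what the paper's statement means. Your route is more elementary and self-contained --- it avoids the center-manifold theorem and the smoothness implicitly needed for the power-series argument, and it yields quantitative bounds (indeed, the integrable drift even shows trajectories converge to a nearby equilibrium on the hyperplane). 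The paper's route buys a conceptual identification --- the equilibrium hyperplane \emph{is} the center manifold --- which fits the machinery used for the degenerate bifurcation analysis in Proposition~\ref{prop:degenerate-transcritical-bifurcation}, whereas your argument establishes the lemma as a standalone fact.
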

\begin{proof}
	Without loss of generality, we may assume in the following that $x_{0} = 0$ with $f_{1}\tof{x_{0}} = f_{1}\tof{0} \neq 0$. Then, $DF\tof{x_{0}} = DF\tof{0} = \tprojector{e_{1}}{f\tof{0}}$ with eigenvalues $\tinner{e_{1}}{f\tof{0}} = f_{1}\tof{0}$ for the eigenvector $f\tof{0}$ and $0$ for the eigenvectors $e_{2}$, \ldots $e_{d}$ due to Lemma~\ref{lem:rank-1-projection-spectrum}. Hence, we have that $x_{0}$ is a non-hyperbolic equilibrium. Now, in order to conclude about its stability, we need to consider the dynamics on the $d-1$-dimensional center manifold.

	In order to perform the center-manifold reduction, define a linear operator $U$ together with its inverse by their action on the standard basis vectors via
	\begin{equation}
		\begin{split}
		&U : e_{i} \mapsto \begin{cases} e_{i+1} & \text{if $1 \leq i < d$} \\ f\tof{0} & \text{if $i = d$} \end{cases} \\
			&\quad \text{and} \quad \\
		&\inverse{U} : e_{i} \mapsto \begin{cases} \frac{1}{f_{1}\tof{0}} \parenth{e_{d} - \sum_{j = 1}^{d-1} f_{j+1}\tof{0} e_{j}} & \text{if $i = 1$} \\ e_{i-1} & \text{if $1 < i \leq d$} \end{cases} \mathpunctuation{.}
		\end{split}
	\end{equation}
	These operators are well-defined, and upon the transformation $x \mapsto \inverse{U} x$ the dynamical system is brought into a standard form with its linearisation in block-diagonal form. Indeed, for the system
	\begin{equation}
		\dot{x} = \hat{F}\tof{x} := \inverse{U} F\tof{U x} = f_{1}\tof{0} x_{d} \inverse{U} f\tof{U x}
		\label{eq:normal-form-dynamical-system-equation}
	\end{equation}
	we have that $D\hat{F}\tof{0} = f_{1}\tof{0} \tprojector{e_{d}}{e_{d}}$.

	If $h: \reals^{d-1} \to \reals$ is such that its graph represents the center manifold locally in a neighbourhood of $0$, then the invariance condition reads
	\begin{equation}
		\begin{split}
			0 &= Dh\tof{\xi}\bracket{\hat{F}_{1, \ldots d-1}\tof{\begin{smallmatrix}\xi \\ h\tof{\xi}\end{smallmatrix}}} - \hat{F}_{d}\tof{\begin{smallmatrix}\xi \\ h\tof{\xi}\end{smallmatrix}} \\
			&= h\tof{\xi} \left(\sum_{i = 1}^{d-1} Dh_{i}\tof{\xi} \left\lbrace f_{1}\tof{0} f_{i+1}\tof{U\tof{\begin{smallmatrix}\xi \\ h\tof{\xi}\end{smallmatrix}}} - f_{i+1}\tof{0} f_{1}\tof{U\tof{\begin{smallmatrix}\xi \\ h\tof{\xi}\end{smallmatrix}}} \right\rbrace - f_{1}\tof{U\tof{\begin{smallmatrix}\xi \\ h\tof{\xi}\end{smallmatrix}}}\right)
		\end{split}
		\label{eq:center-manifold-invariance-equation}
	\end{equation}
	subject to the condition that $h\tof{0} = 0$ and $Dh\tof{0} = 0$.
	In order to compute $h$ to arbitrary precision, we make a power-series ansatz,
	\begin{equation}
		h\tof{\xi} := \sum_{n = 0}^{\infty} \sum_{\stack{\alpha \in \naturals_{0}^{d-1}}{\tabs{\alpha} = n}} c_{\alpha} \xi^{\alpha}
	\end{equation}
	where we use the conventional notation for multiindices: For a multiindex $\alpha \in \naturals_{0}^{d}$ ($d \in \naturals$), we set $\tabs{\alpha} := \sum_{i = 1}^{d} \alpha_{i}$. Moreover, for every vector $\xi \in \reals^{d}$, $\xi^{\alpha} := \prod_{i = 1}^{d} \xi_{i}^{\alpha_{i}}$.
	
	To ensure that $h\tof{0} = 0$ and $Dh\tof{0} = 0$, we require that $c_{\alpha} = 0$ whenever $\tabs{\alpha} \leq 1$. However, we will show that, in fact, $c_{\alpha} = 0$ for every $\alpha$ if $h$ satisfies \eqref{eq:center-manifold-invariance-equation} so that $h \equiv 0$.
	To see that, suppose $c_{\alpha} = 0$ whenever $\tabs{\alpha} \leq N$ for some $N \in \naturals$. Then, if $h$ is to satisfy \eqref{eq:center-manifold-invariance-equation}, also $c_{\alpha} = 0$ whenever $\tabs{\alpha} \leq N+1$.

	Indeed, if $c_{\alpha} = 0$ for $\tabs{\alpha} \leq N$, we have that $h\tof{\xi} = \O\tof{\tnorm{\xi}^{N+1}}$, while $Dh\tof{\xi} = \O\tof{\tnorm{\xi}^{N}}$. Hence, the invariance condition \eqref{eq:center-manifold-invariance-equation} yields that
	\begin{equation}
		\begin{split}
			0 &= \parenth{\sum_{\stack{\alpha \in \naturals_{0}^{d-1}}{\tabs{\alpha} = N+1}} c_{\alpha} \xi^{\alpha} + \O\tof{\tnorm{\xi}^{N+2}}} \parenth{\O\tof{\tnorm{\xi}^{N}} - f_{1}\tof{0} + \O\tof{\tnorm{\xi}^{1}}} = - f_{1}\tof{0} \sum_{\stack{\alpha \in \naturals_{0}^{d-1}}{\tabs{\alpha} = N+1}} c_{\alpha} \xi^{\alpha} + \O\tof{\tnorm{\xi}^{N + 2}} \mathpunctuation{.}
		\end{split}
	\end{equation}
	For this to hold for every $\xi$, we require that $c_{\alpha} = 0$ for $\tabs{\alpha} = N+1$ so that, altogether, $c_{\alpha} = 0$ for $\tabs{\alpha} \leq N+1$.

	Hence, since by assumption $c_{\alpha} = 0$ whenever $\tabs{\alpha} \leq 1$, it follows inductively that $c_{\alpha} = 0$ for $\tabs{\alpha} \leq N$ for every $N \in \naturals$ so that $h\tof{\xi} = 0$.
	Thus, on the center manifold,
	\begin{equation}
		\dot{\xi} = \hat{F}_{1, \ldots d-1}\tof{\begin{smallmatrix}\xi \\ h\tof{\xi}\end{smallmatrix}} = 0
	\end{equation}
	meaning that in the center directions there are no dynamics. Hence, on the center manifold, $0$ is a stable equilibrium, and for the full system we conclude that stability of the origin depends only on the sign of $f_{1}\tof{0}$~\parencites[Theorem~2]{carr1981applications}[Theorem~10.6]{wiggins2017ordinary}.
\end{proof}

With that, we can now state the conditions for the existence of a bifurcation in the prototypical dynamical system \eqref{eq:degenerate-bifurcation-prototype-dynamical-system}.

\begin{proposition}
	\label{prop:degenerate-transcritical-bifurcation}
	Let $U \subset \reals^{d}$ and $\Lambda \subset \reals$ be open, and let $F = \omega \cdot f : U \times \Lambda \to \reals^{d}$ with $\omega: U \to \reals$ and $f: U \times \Lambda \to \reals^{d}$ at least once continuously differentiable. Assume that $\omega\tof{x_{0}} = 0$ for some $x_{0} \in U$ with $D\omega\tof{x_{0}} \neq 0$, and let $\Omega_{0} = \inverse{\omega}\tof{0}$. Assume furthermore that $f\tof{x_{0}, \lambda_{*}} = 0$ for some $\lambda_{*} \in \Lambda$, and suppose that $D_{x}f\tof{x_{0},\lambda_{*}}$ is invertible and that $D_{\lambda}f\tof{x_{0}, \lambda_{*}}$ and $\inverse{D_{x}f\tof{x_{0},\lambda_{*}}}\tbracket{D_{\lambda}f\tof{x_{0},\lambda_{*}}} \not\in T_{x_{0}}\Omega_{0}$.
	
	Then, the system
	\begin{equation*}
		\dot{x} = F\tof{x, \lambda} = \omega\tof{x} f\tof{x, \lambda}
	\end{equation*}
	exhibits a (degenerate) transcritical bifurcation at $x_{0}$ when $\lambda = \lambda_{*}$. More specifically, there exist open neighbourhoods $\tilde{U} \subset U$ and $\tilde{\Lambda} \subset \Lambda$ around $x_{0}$ and $\lambda_{*}$, respectively, so that $\Omega_{0} \cap \tilde{U}$ is a differential manifold of codimension $1$, and $F\tof{x, \lambda} = 0$ for every $x \in \Omega_{0} \cap \tilde{U}$ and $\lambda \in \tilde{\Lambda}$. In addition, there exist points $x\tof{\lambda}$ for every $\lambda \in \tilde{\Lambda}$ so that $F\tof{x\tof{\lambda}, \lambda} = 0$ with $x\tof{\lambda_{*}} = x_{0}$. In fact, if $F\tof{x, \lambda} = 0$ for some $x \in \tilde{U}$ and $\lambda \in \tilde{\Lambda}$, then $x \in \Omega_{0}$ or $x = x\tof{\lambda}$.
	
	Assuming further that $\Re\spectrum{D_{x}f\tof{x_{0},\lambda_{*}}} \subset \interval[open]{0}{\infty}$ or $\Re\spectrum{D_{x}f\tof{x_{0},\lambda_{*}}} \subset \interval[open]{-\infty}{0}$ and that
	\begin{itemize}[label=--]
		\item $\tinner{D_{\lambda}f\tof{x_{0}, \lambda_{*}}}{D\omega\tof{x_{0}}} \tinner{D\omega\tof{x_{0}}}{\inverse{D_{x}f\tof{x_{0}, \lambda_{*}}}\tbracket{D_{\lambda}f\tof{x_{0}, \lambda_{*}}}} > 0$ if $\Re\spectrum{D_{x}f\tof{x_{0},\lambda_{*}}} \subset \interval[open]{0}{\infty}$ and
		\item $\tinner{D_{\lambda}f\tof{x_{0}, \lambda_{*}}}{D\omega\tof{x_{0}}} \tinner{D\omega\tof{x_{0}}}{\inverse{D_{x}f\tof{x_{0}, \lambda_{*}}}\tbracket{D_{\lambda}f\tof{x_{0}, \lambda_{*}}}} < 0$ if $\Re\spectrum{D_{x}f\tof{x_{0},\lambda_{*}}} \subset \interval[open]{-\infty}{0}$,
	\end{itemize}
	dynamically, there is an exchange of stability along the curve $x\tof{\lambda}$ at $\lambda = \lambda_{*}$, so that if $x\tof{\lambda}$ is unstable for $\lambda < \lambda_{*}$, it is stable for $\lambda > \lambda_{*}$ and vice versa.
	
	More specifically, if $\Re\spectrum{D_{x}f\tof{x_{0},\lambda_{*}}} \subset \interval[open]{0}{\infty}$ and that curve crosses $\Omega_{0}$ in the direction of the gradient, the stability switches from stable to unstable, and if that curve crosses $\Omega_{0}$ in the opposite direction, the stability switches from unstable to stable. Conversely, if $\Re\spectrum{D_{x}f\tof{x_{0},\lambda_{*}}} \subset \interval[open]{-\infty}{0}$ and that curve crosses $\Omega_{0}$ in the direction of the gradient, the stability switches from unstable to stable and if that curve crosses $\Omega_{0}$ in the opposite direction, from stable to unstable.
\end{proposition}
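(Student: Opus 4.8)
The plan is to exploit the product structure $F = \omega \cdot f$ and analyse the two factors of the zero set separately with the implicit function theorem. First I would note that, since $\omega\tof{x_{0}} = 0$ with $D\omega\tof{x_{0}} \neq 0$, the value $0$ is a regular value of $\omega$ near $x_{0}$, so on a sufficiently small neighbourhood $\tilde{U}$ the set $\Omega_{0} = \inverse{\omega}\tof{0}$ is a codimension-$1$ submanifold with tangent space $T_{x_{0}}\Omega_{0} = \kernel{D\omega\tof{x_{0}}}$; and since $F = \omega f$ vanishes wherever $\omega = 0$, every point of $\Omega_{0} \cap \tilde{U}$ is an equilibrium for all $\lambda$. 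Next, from $f\tof{x_{0},\lambda_{*}} = 0$ and the invertibility of $D_{x}f\tof{x_{0},\lambda_{*}}$, the implicit function theorem applied to $f\tof{x,\lambda} = 0$ gives a unique continuously differentiable curve $x\tof{\lambda}$ with $x\tof{\lambda_{*}} = x_{0}$ and $f\tof{x\tof{\lambda},\lambda} = 0$, and implicit differentiation yields $x'\tof{\lambda_{*}} = -\inverse{D_{x}f\tof{x_{0},\lambda_{*}}}\tbracket{D_{\lambda}f\tof{x_{0},\lambda_{*}}}$. After shrinking $\tilde{U}$ and $\tilde{\Lambda}$ so that the uniqueness clause of the implicit function theorem applies, any zero of $F$ in $\tilde{U} \times \tilde{\Lambda}$ must satisfy $\omega\tof{x} = 0$ or $f\tof{x,\lambda} = 0$ and hence lies on $\Omega_{0}$ or equals $x\tof{\lambda}$; this settles all the structural claims.

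Second, the crossing is governed by transversality. The hypothesis $\inverse{D_{x}f\tof{x_{0},\lambda_{*}}}\tbracket{D_{\lambda}f\tof{x_{0},\lambda_{*}}} \not\in T_{x_{0}}\Omega_{0}$ is, by the formula for $x'\tof{\lambda_{*}}$, exactly the statement $x'\tof{\lambda_{*}} \not\in \kernel{D\omega\tof{x_{0}}}$, i.e. $\tinner{D\omega\tof{x_{0}}}{x'\tof{\lambda_{*}}} \neq 0$. Therefore $\derivative[\lambda_{*}]{\lambda}\omega\tof{x\tof{\lambda}} = \tinner{D\omega\tof{x_{0}}}{x'\tof{\lambda_{*}}} \neq 0$, so $\omega\tof{x\tof{\lambda}}$ is strictly monotone through $0$ at $\lambda_{*}$ and the equilibrium curve crosses $\Omega_{0}$ transversally at $x_{0}$, producing exactly the degenerate transcritical configuration of Fig.~\ref{fig:degenerate-transcritical-bifurcation}. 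In particular $x\tof{\lambda} \not\in \Omega_{0}$ for $\lambda$ in a punctured neighbourhood of $\lambda_{*}$.

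Third, for the stability I would linearise $F = \omega f$, obtaining $\tof{D_{x}F}_{ij} = f_{i}\,\partial_{j}\omega + \omega\,\tof{D_{x}f}_{ij}$, so that at an equilibrium on the curve, where $f\tof{x\tof{\lambda},\lambda} = 0$, this collapses to $D_{x}F\tof{x\tof{\lambda},\lambda} = \omega\tof{x\tof{\lambda}}\,D_{x}f\tof{x\tof{\lambda},\lambda}$. Its eigenvalues are $\omega\tof{x\tof{\lambda}}$ times those of $D_{x}f$, and since $\omega\tof{x\tof{\lambda}}$ is real and $\Re\spectrum{D_{x}f\tof{x_{0},\lambda_{*}}}$ lies entirely in one open half-line, the equilibrium is hyperbolic for $\lambda \neq \lambda_{*}$ with all real parts sharing a common sign fixed by that of $\omega\tof{x\tof{\lambda}}$; as $\omega\tof{x\tof{\lambda}}$ flips sign at $\lambda_{*}$, the curve exchanges stability there. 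For the non-hyperbolic manifold branch I would use $D\omega\tof{x_{0}} \neq 0$ to straighten $\omega$ into the first coordinate via a local diffeomorphism, turning the system into $\dot{y} = y_{1}\tilde{f}\tof{y,\lambda}$ with $\tilde{f}_{1} = \tinner{D\omega}{f}$, which is covered by Lemma~\ref{lem:degenerate-equilibrium-stability}; the normal stability of the manifold equilibrium at $x_{0}$ is then governed by the sign of $\tilde{f}_{1}\tof{x_{0},\lambda} = \tinner{D\omega\tof{x_{0}}}{f\tof{x_{0},\lambda}}$, whose derivative at $\lambda_{*}$ is $\tinner{D\omega\tof{x_{0}}}{D_{\lambda}f\tof{x_{0},\lambda_{*}}}$.

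The main obstacle, and the heart of the precise sign conditions, is the bookkeeping that welds these two branches into a genuine exchange. The product $\tinner{D_{\lambda}f\tof{x_{0},\lambda_{*}}}{D\omega\tof{x_{0}}}\,\tinner{D\omega\tof{x_{0}}}{\inverse{D_{x}f\tof{x_{0},\lambda_{*}}}\tbracket{D_{\lambda}f\tof{x_{0},\lambda_{*}}}}$ is precisely the product of the rate at which $\tilde{f}_{1}$ changes sign at the fixed manifold point and the negative of the rate at which $\omega\tof{x\tof{\lambda}}$ changes sign along the curve. I would check the four sign combinations and verify that this product being positive in the case $\Re\spectrum{D_{x}f\tof{x_{0},\lambda_{*}}} \subset \interval[open]{0}{\infty}$, or negative in the case $\Re\spectrum{D_{x}f\tof{x_{0},\lambda_{*}}} \subset \interval[open]{-\infty}{0}$, is exactly the requirement that the curve and the manifold carry opposite stabilities on each side of $\lambda_{*}$ — so that the stable branch is genuinely handed from one to the other rather than both switching in the same direction. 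The remaining super-/subcritical assignment then follows by comparing the crossing direction $\tinner{D\omega\tof{x_{0}}}{x'\tof{\lambda_{*}}}$ against the half-line containing $\Re\spectrum{D_{x}f\tof{x_{0},\lambda_{*}}}$, reading off the stable-to-unstable versus unstable-to-stable alternatives in the final clauses.
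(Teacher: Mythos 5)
Your proposal is correct and takes essentially the same route as the paper's own proof: the regular-value theorem for the manifold $\Omega_{0}$, the implicit function theorem applied to $f$ (with its uniqueness clause accounting for all zeros of $F = \omega f$), the collapsed linearisation $D_{x}F(x(\lambda),\lambda) = \omega(x(\lambda))\,D_{x}f(x(\lambda),\lambda)$ on the non-trivial branch, Lemma~\ref{lem:degenerate-equilibrium-stability} for the degenerate equilibria on $\Omega_{0}$ with the eigenvalue $\langle D\omega(x), f(x,\lambda)\rangle$, and the same four-way sign bookkeeping for the exchange of stability. The only cosmetic difference is that you make the flattening diffeomorphism reducing to the setting of Lemma~\ref{lem:degenerate-equilibrium-stability} explicit, whereas the paper invokes the lemma directly.
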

\begin{proof}
	Without loss of generality, and upon choosing $U$ sufficiently small, we may assume that $D\omega\tof{x} \neq 0$ for every $x \in U$. Otherwise, since $D\omega\tof{x_{0}} \neq 0$, there exists an open neighbourhood $\tilde{U} \subset U$ around $x_{0}$ such that $D\omega\tof{x} \neq 0$ for every $x \in \tilde{U}$.
	Thus, $\omega : U \to \reals$ is a differentiable map of constant rank, so that $\Omega_{0} = \inverse{\omega}\tof{0}$ is a differentiable manifold of codimension $1$ in the ambient space~\parencites[Theorem~2.5.3]{conlon2001differentiable}[Corollary~5.14]{lee2012introduction}. Importantly, by construction, $F\tof{x, \lambda} = 0$ for every $x \in \Omega_{0}$ and $\lambda \in \Lambda$ so that $\Omega_{0}$ constitutes a manifold of trivial equilibria.
	
	Since $D_{x}F\tof{x, \lambda} = f\tof{x, \lambda} D\omega\tof{x}$ for every $x \in \Omega_{0}$ and $D\omega\tof{x} \perp T_{x}\Omega_{0}$, where $T_{x}\Omega_{0}$ is the tangent space to $\Omega_{0}$ at $x$, we have that $T_{x}\Omega_{0} \subseteq \kernel{D_{x}F\tof{x,\lambda}}$ and $\dim\kernel{D_{x}F\tof{x,\lambda}} \geq d - 1$. In particular, if $f\tof{x,\lambda} \neq 0$, $D\omega\tof{x} \not\in \kernel{D_{x}F\tof{x,\lambda}}$, and thus, in fact, $\dim\kernel{D_{x}F\tof{x,\lambda}} = d - 1$.
	Hence, since there exists $\lambda_{*} \in \Lambda$ such that $f\tof{x_{0},\lambda_{*}} = 0$, we have that $D_{x}F\tof{x_{0}, \lambda}$ acquires an addition defect when $\lambda = \lambda_{*}$.
	
	Now, since $D_{x}f\tof{x_{0},\lambda_{*}}$ is invertible, by the Implicit Function Theorem~\parencite[Theorem~I.1.1]{kielhoefer2012bifurcation}, there exists $\phi : \Lambda \to U$ such that $\phi\tof{\lambda_{*}} = x_{0}$ and $D\phi\tof{\lambda_{*}} = - \inverse{D_{x}f\tof{x_{0},\lambda_{*}}}\tbracket{D_{\lambda}f\tof{x_{0},\lambda_{*}}}$. Moreover, if $f\tof{x,\lambda} = 0$ on $U \times \Lambda$, then necessarily $x \in \range{\phi}$, where, as before and to simplify notation, we assume that $U$ and $\Lambda$ have been chosen sufficiently small, so that it is not necessary to pass to small neighbourhoods $\tilde{U}$ and $\tilde{\Lambda}$.
	This establishes the existence of a branch of non-trivial equilibria in the extended phase space in addition to the equilibria on the manifold $\Omega_{0}$. In addition, this branch intersects the manifold non-tangentially at $x_{0}$ for $\lambda = \lambda_{*}$, since $\tinner{D\omega\tof{x_{0}}}{D\phi\tof{\lambda_{*}}} = - \tinner{D\omega\tof{x_{0}}}{\inverse{D_{x}f\tof{x_{0},\lambda_{*}}}\tbracket{D_{\lambda}f\tof{x_{0},\lambda_{*}}}} \neq 0$ by assumption.
	
	Regarding the dynamic stability of these equilibrium points, we have that $D_{x}F\tof{\phi\tof{\lambda}, \lambda} = \omega\tof{\phi\tof{\lambda}} D_{x}f\tof{\phi\tof{\lambda}, \lambda}$ so that $\spectrum{D_{x}F\tof{\phi\tof{\lambda}, \lambda}} = \omega\tof{\phi\tof{\lambda}} \, \spectrum{D_{x}f\tof{\phi\tof{\lambda}, \lambda}}$. By assumption, we have that either \linebreak $\Re\spectrum{D_{x}f\tof{x_{0},\lambda_{*}}} \subset \interval[open]{0}{\infty}$ or $\Re\spectrum{D_{x}f\tof{x_{0},\lambda_{*}}} \subset \interval[open]{-\infty}{0}$ so that consequently also $\Re\spectrum{D_{x}f\tof{\phi\tof{\lambda}, \lambda}} \subset \interval[open]{0}{\infty}$ or $\Re\spectrum{D_{x}f\tof{\phi\tof{\lambda}, \lambda}} \subset \interval[open]{-\infty}{0}$ for $\lambda$ in a neighbourhood of $\lambda_{*}$. Moreover, $\omega\tof{\phi\tof{\lambda}}$ changes its sign at $\lambda = \lambda_{*}$ since $\derivative[\lambda_{*}]{\lambda} \omega\tof{\phi\tof{\lambda}} = \tinner{D\omega\tof{x_{0}}}{D\phi\tof{\lambda_{*}}} \neq 0$.
	
	As for some $x \in \Omega_{0}$, owing to Lemma~\ref{lem:degenerate-equilibrium-stability}, we have that it is a non-hyperbolic equilibrium point whose stability is determined by the only non-vanishing eigenvalue $\tinner{D\omega\tof{x}}{f\tof{x,\lambda}}$. As before, this changes its sign at $\lambda = \lambda_{*}$ since $\derivative[\lambda_{*}]{\lambda} \tinner{D\omega\tof{x}}{f\tof{x,\lambda}} = \tinner{D\omega\tof{x}}{D_{\lambda}f\tof{x,\lambda_{*}}} \neq 0$. This is because, by assumption, $D_{\lambda}f\tof{x_{0},\lambda_{*}} \not\in T_{0}\Omega_{0}$ and therefore $\tinner{D\omega\tof{x_{0}}}{D_{\lambda}f\tof{x_{0}, \lambda_{*}}} \neq 0$, so that also $\tinner{D\omega\tof{x}}{D_{\lambda}f\tof{x,\lambda_{*}}} \neq 0$ in some sufficiently small neighbourhood of in $x_{0}$ in $\Omega_{0}$.
	
	Hence, finally, if $\Re\spectrum{D_{x}f\tof{x_{0},\lambda_{*}}} \subset \interval[open]{0}{\infty}$ and
	\begin{equation}
		\sign{}\tinner{D\omega\tof{x_{0}}}{D_{\lambda}f\tof{x_{0},\lambda_{*}}} = +1 = \sign\tinner{D\omega\tof{x_{0}}}{\linebreak \inverse{D_{x}f\tof{x_{0},\lambda_{*}}}\tbracket{D_{\lambda}f\tof{x_{0},\lambda_{*}}}} \mathpunctuation{,}
	\end{equation}
	$\derivative[\lambda_{*}]{\lambda} \omega\tof{\phi\tof{\lambda}} < 0$ so that the non-trivial equilibrium switches from being stable to being unstable.
	At the same time, $\derivative[\lambda_{*}]{\lambda} \tinner{D\omega\tof{x}}{f\tof{x,\lambda}} > 0$ so that the trivial equilibria in a neighbourhood of $x_{0}$ switch from being stable to being unstable.
	Similarly, if
	\begin{equation}
		\sign{\tinner{D\omega\tof{x_{0}}}{D_{\lambda}f\tof{x_{0},\lambda_{*}}}} = -1 = \sign\tinner{D\omega\tof{x_{0}}}{\inverse{D_{x}f\tof{x_{0},\lambda_{*}}}\tbracket{D_{\lambda}f\tof{x_{0},\lambda_{*}}}} \mathpunctuation{,}
	\end{equation}
	we obtain the reverse situation.
	Furthermore, if conversely $\Re\spectrum{D_{x}f\tof{x_{0},\lambda_{*}}} \subset \interval[open]{-\infty}{0}$ and
	\begin{equation}
		\sign{}\tinner{D\omega\tof{x_{0}}}{D_{\lambda}f\tof{x_{0},\lambda_{*}}} = +1 \neq -1 = \sign{}\tinner{D\omega\tof{x_{0}}}{\inverse{D_{x}f\tof{x_{0},\lambda_{*}}}\tbracket{D_{\lambda}f\tof{x_{0},\lambda_{*}}}} \mathpunctuation{,}
	\end{equation}
	we have that $\derivative[\lambda_{*}]{\lambda} \omega\tof{\phi\tof{\lambda}} < 0$, so that the non-trivial equilibrium switches from being unstable to being stable. Again, at the same time, $\derivative[\lambda_{*}]{\lambda} \tinner{D\omega\tof{x}}{f\tof{x,\lambda}} > 0$ so that the trivial equilibria in a neighbourhood of $x_{0}$ switch from being stable to being unstable. Similarly, if
	\begin{equation}
		\sign{}\tinner{D\omega\tof{x_{0}}}{D_{\lambda}f\tof{x_{0},\lambda_{*}}} = -1 \neq +1 = \sign{}\tinner{D\omega\tof{x_{0}}}{\inverse{D_{x}f\tof{x_{0},\lambda_{*}}}\tbracket{D_{\lambda}f\tof{x_{0},\lambda_{*}}}} \mathpunctuation{,}
	\end{equation}
	we obtain the reverse situation.
	
	In summary, we have that if $\Re\spectrum{D_{x}f\tof{x_{0},\lambda_{*}}} \subset \interval[open]{0}{\infty}$ and the branch of non-trivial equilibria crosses the manifold of trivial equilibria in the direction of its gradient, the non-trivial equilibrium switches from being stable to being unstable, while in the opposite direction it switches from being unstable to being stable. Conversely, if $\Re\spectrum{D_{x}f\tof{x_{0},\lambda_{*}}} \subset \interval[open]{-\infty}{0}$ and the branch of non-trivial equilibria crosses the manifold of trivial equilibria in the direction of the manifold's gradient, the non-trivial equilibrium switches from being unstable to being stable and, in the opposite direction, from being stable to being unstable.
\end{proof}

For the proof of Theorem~\ref{thm:adaptive-voter-model-order-2-closure}, we only need the planar case ($d = 2$). Here, the spectral conditions simplify considerably as the localisation of the spectrum in either the left or right half-plane can be characterised in terms of the determinant and trace.

\begin{lemma}
	\label{lem:2x2-spectral-localisation}
	Let $A \in \operatorname{Mat}_{\reals}\tof{2 \times 2}$. Then, $\Re\spectrum{A} \subset \interval[open]{-\infty}{0}$ if and only if $\det{A} > 0$ and $\trace{A} < 0$. Conversely, $\Re\spectrum{A} \subset \interval[open]{0}{\infty}$ if and only if $\det{A} > 0$ and $\trace{A} > 0$.
\end{lemma}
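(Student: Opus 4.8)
The plan is to reduce everything to the two elementary symmetric functions of the spectrum, namely the trace and the determinant. Writing the characteristic polynomial as $p\tof{\lambda} = \lambda^{2} - \trace{A}\,\lambda + \det{A}$, the eigenvalues $\lambda_{1}, \lambda_{2}$ (counted with multiplicity) satisfy, by Vieta's formulas, $\lambda_{1} + \lambda_{2} = \trace{A}$ and $\lambda_{1}\lambda_{2} = \det{A}$. I would then split into two cases according to the sign of the discriminant $\Delta = \tparenth{\trace{A}}^{2} - 4\det{A}$, proving the left-half-plane equivalence $\Re\spectrum{A} \subset \interval[open]{-\infty}{0} \iff \det{A} > 0 \text{ and } \trace{A} < 0$; the right-half-plane statement then follows verbatim by replacing $A$ with $-A$, since $\det\tof{-A} = \det{A}$ and $\trace{-A} = -\trace{A}$.

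In the first case $\Delta < 0$, the eigenvalues form a complex-conjugate pair $\lambda_{1,2} = a \pm b\,\mathrm{i}$ with $b \neq 0$. Here $\det{A} = a^{2} + b^{2} > 0$ holds automatically, while $\Re\spectrum{A} = \tset{a}$ with $a = \tfrac{1}{2}\trace{A}$. Hence the common real part is strictly negative precisely when $\trace{A} < 0$, and the condition $\det{A} > 0$ comes for free; this confirms the equivalence on the $\Delta < 0$ branch. In the second case $\Delta \geq 0$, both eigenvalues are real, and the key observation is that $\det{A} = \lambda_{1}\lambda_{2} > 0$ forces them to share a common nonzero sign, after which $\trace{A} = \lambda_{1} + \lambda_{2}$ selects that sign. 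Concretely, $\lambda_{1}, \lambda_{2} < 0$ if and only if $\lambda_{1}\lambda_{2} > 0$ and $\lambda_{1} + \lambda_{2} < 0$, which is exactly $\det{A} > 0$ together with $\trace{A} < 0$; for the converse I would note that $\det{A} > 0$ already excludes both a vanishing eigenvalue and a mixed-sign pair, so the trace condition alone pins down the common sign.

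Combining the two cases yields the stated biconditional, and the right-half-plane version follows by the sign flip described above. The only point requiring care — which I would flag as the main (and quite minor) obstacle — is the bookkeeping at the boundary: one must invoke the \emph{strict} inequality $\det{A} > 0$ both to rule out a zero eigenvalue (which would place $0 \in \spectrum{A}$ and violate the open-half-plane membership) and to exclude the mixed-sign real case, and one must verify that the complex-conjugate branch never contradicts either direction. Since $\det{A} > 0$ is automatic when $\Delta < 0$ and encodes equality of signs when $\Delta \geq 0$, the two cases glue together with no additional hypotheses, and no genuine difficulty remains.
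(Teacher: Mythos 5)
Your proof is correct and takes essentially the same route as the paper: both arguments reduce the claim to the location of the roots of the characteristic quadratic $\lambda^{2} - \trace{A}\,\lambda + \det{A}$. The only difference is presentational — the paper compresses your discriminant case split into the single identity $2\Re{\lambda_{\pm}} = \trace{A} \pm \Re\sqrt{\trace{A}^{2} - 4\det{A}}$, observing that the real parts have a definite sign precisely when $\Re\sqrt{\trace{A}^{2} - 4\det{A}} < \tabs{\trace{A}}$, i.e. when $\det{A} > 0$, and thereby treats both half-planes at once, whereas you argue via Vieta's formulas in two cases and obtain the right half-plane by the substitution $A \mapsto -A$.
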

\begin{proof}
	Indeed, if $A \in \operatorname{Mat}_{\reals}\tof{2 \times 2}$, then $2 \Re{\lambda_{\pm}} = \trace{A} \pm \Re\sqrt{\trace{A}^{2} - 4 \det{A}}$. The latter has a definite sign if and only if $\Re\sqrt{\trace{A}^{2} - 4 \det{A}} < \tabs{\trace{A}}$, and thus $\det{A} > 0$. Furthermore, in that case $\Re{\lambda_{\pm}} > 0$ or $\Re{\lambda_{\pm}} < 0$ if and only if $\trace{A} > 0$ or $\trace{A} < 0$, respectively.
\end{proof}

With that we obtain the following as a corollary of Proposition~\ref{prop:degenerate-transcritical-bifurcation}.

\begin{corollary}
	\label{cor:planar-degenerate-transcritical-bifurcation}
	Let $U \subset \reals^{2}$ and $\Lambda \subset \reals$ be open, and let $F = \omega \cdot f : U \times \Lambda \to \reals^{2}$ with $\omega: U \to \reals$ and $f: U \times \Lambda \to \reals^{2}$ at least once continuously differentiable. Assume that $\omega\tof{x_{0}} = 0$ for some $x_{0} \in U$ with $D\omega\tof{x_{0}} \neq 0$, and let $\Omega_{0} = \inverse{\omega}\tof{0}$. Assume furthermore that $f\tof{x_{0}, \lambda_{*}} = 0$ for some $\lambda_{*} \in \Lambda$, and suppose that $D_{x}f\tof{x_{0},\lambda_{*}}$ is invertible and that $D_{\lambda}f\tof{x_{0}, \lambda_{*}}$ and $\inverse{D_{x}f\tof{x_{0},\lambda_{*}}}\tbracket{D_{\lambda}f\tof{x_{0},\lambda_{*}}} \not\in T_{x_{0}}\Omega_{0}$.
	
	Then, the system
	\begin{equation*}
		\dot{x} = F\tof{x, \lambda} = \omega\tof{x} f\tof{x, \lambda}
	\end{equation*}
	exhibits a (degenerate) transcritical bifurcation at $x_{0}$ when $\lambda = \lambda_{*}$. More specifically, there exists open neighbourhoods $\tilde{U} \subset U$ and $\tilde{\Lambda} \subset \Lambda$ around $x_{0}$ and $\lambda_{*}$, respectively, so that $\Omega_{0} \cap \tilde{U}$ is a differential manifold of codimension $1$, and $F\tof{x, \lambda} = 0$ for every $x \in \Omega_{0} \cap \tilde{U}$ and $\lambda \in \tilde{\Lambda}$. In addition, there exist points $x\tof{\lambda}$ for every $\lambda \in \tilde{\Lambda}$ so that $F\tof{x\tof{\lambda}, \lambda} = 0$ with $x\tof{\lambda_{*}} = x_{0}$. In fact, if $F\tof{x, \lambda} = 0$ for some $x \in \tilde{U}$ and $\lambda \in \tilde{\Lambda}$, then $x \in \Omega_{0}$ or $x = x\tof{\lambda}$.
	
	Assuming further that $\det{D_{x}f\tof{x_{0},\lambda_{*}}} > 0$ and that
	\begin{equation*}
		\tinner{D_{\lambda}f\tof{x_{0}, \lambda_{*}}}{D\omega\tof{x_{0}}} \tinner{D\omega\tof{x_{0}}}{\inverse{D_{x}f\tof{x_{0}, \lambda_{*}}}\tbracket{D_{\lambda}f\tof{x_{0}, \lambda_{*}}}} \trace{D_{x}f\tof{x_{0},\lambda_{*}}} > 0 \mathpunctuation{,}
	\end{equation*}
	dynamically, there is a switch in stability along the curve $x\tof{\lambda}$ at $\lambda = \lambda_{*}$, so that if $x\tof{\lambda}$ is unstable for $\lambda < \lambda_{*}$ it is stable for $\lambda > \lambda_{*}$ and vice versa.
	
	More specifically, if $\trace{D_{x}f\tof{x_{0},\lambda_{*}}} > 0$ and that curve crosses $\Omega_{0}$ in the direction of the gradient, the stability switches from stable to unstable, and if that curve crosses $\Omega_{0}$ in the opposite direction, it switches from unstable to stable. Conversely, if $\trace{D_{x}f\tof{x_{0},\lambda_{*}}} < 0$ and that curve crosses $\Omega_{0}$ in the direction of the gradient, the stability switches from unstable to stable, and if that curve crosses $\Omega_{0}$ in the opposite direction, the stability switches from stable to unstable.
\end{corollary}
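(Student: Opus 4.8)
The plan is to obtain this statement as a direct specialisation of Proposition~\ref{prop:degenerate-transcritical-bifurcation} to dimension $d = 2$, using Lemma~\ref{lem:2x2-spectral-localisation} to re-express the spectral hypotheses in terms of the determinant and trace. The hypotheses on $\omega$, $f$, $x_{0}$, and $\lambda_{*}$ governing the existence of the bifurcation are verbatim those of Proposition~\ref{prop:degenerate-transcritical-bifurcation}, so the first half of the conclusion---the manifold $\Omega_{0} \cap \tilde{U}$ of trivial equilibria, the non-trivial branch $x\tof{\lambda}$ through $\tparenth{x_{0}, \lambda_{*}}$, and the fact that these two families exhaust the local zero set of $F$---transfers without any additional work.

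For the stability statement I would first invoke Lemma~\ref{lem:2x2-spectral-localisation} with $A = D_{x}f\tof{x_{0}, \lambda_{*}}$. The assumption $\det{D_{x}f\tof{x_{0}, \lambda_{*}}} > 0$ is precisely what forces the real parts of both eigenvalues to share a common, definite sign, so that exactly one of the two alternatives $\Re\spectrum{D_{x}f\tof{x_{0}, \lambda_{*}}} \subset \interval[open]{0}{\infty}$ or $\Re\spectrum{D_{x}f\tof{x_{0}, \lambda_{*}}} \subset \interval[open]{-\infty}{0}$ required by Proposition~\ref{prop:degenerate-transcritical-bifurcation} must hold. Moreover, the same lemma identifies which alternative occurs purely through the sign of $\trace{D_{x}f\tof{x_{0}, \lambda_{*}}}$: the spectrum lies in the right half-plane when the trace is positive and in the left half-plane when it is negative.

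The remaining step is to check that the single scalar inequality assumed in the corollary encodes both case-dependent sign conditions of the proposition. Writing $\pi := \tinner{D_{\lambda}f\tof{x_{0}, \lambda_{*}}}{D\omega\tof{x_{0}}} \tinner{D\omega\tof{x_{0}}}{\inverse{D_{x}f\tof{x_{0}, \lambda_{*}}}\tbracket{D_{\lambda}f\tof{x_{0}, \lambda_{*}}}}$ for the product of the two inner products, the corollary assumes $\pi \, \trace{D_{x}f\tof{x_{0}, \lambda_{*}}} > 0$. In the case $\trace{D_{x}f\tof{x_{0}, \lambda_{*}}} > 0$, so that the spectrum lies in the right half-plane, this is equivalent to $\pi > 0$, which is exactly the sign condition demanded by the proposition in that case; in the case $\trace{D_{x}f\tof{x_{0}, \lambda_{*}}} < 0$, so that the spectrum lies in the left half-plane, it is equivalent to $\pi < 0$, again matching the proposition. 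Hence in either case all hypotheses of Proposition~\ref{prop:degenerate-transcritical-bifurcation} are met, and its conclusion on the exchange of stability---together with the refined description of the crossing direction, which carries over because the two signs of the trace correspond to the two spectral alternatives---yields the claimed statement.

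There is no serious analytic obstacle here; the content of the corollary is entirely a translation between the spectral formulation of Proposition~\ref{prop:degenerate-transcritical-bifurcation} and the determinant--trace formulation natural in the planar case. The only point requiring genuine care is the bookkeeping of signs: one must verify that the single inequality $\pi \, \trace{D_{x}f\tof{x_{0}, \lambda_{*}}} > 0$ is logically equivalent to the disjunction of the two case-dependent sign conditions of the proposition, and that the direction-of-crossing conclusions end up attached to the correct half-plane condition in each of the two cases.
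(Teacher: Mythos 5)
Your proposal is correct and follows essentially the same route as the paper's own proof: the paper likewise deduces the corollary directly from Proposition~\ref{prop:degenerate-transcritical-bifurcation} by applying Lemma~\ref{lem:2x2-spectral-localisation} to $D_{x}f\tof{x_{0},\lambda_{*}}$ and observing that the single product inequality is equivalent to the sign of the two inner products matching the sign of the trace, i.e.\ to the disjunction of the proposition's two case-dependent conditions. Your sign bookkeeping in both trace cases agrees with the paper's (more tersely stated) argument, so there is nothing to add.
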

\begin{proof}
	
	Applying Lemma~\ref{lem:2x2-spectral-localisation} to $D_{x}f\tof{x_{0}, \lambda_{*}}$ and noting that
	\begin{equation}
		\sign{\tinner{D_{\lambda}f\tof{x_{0}, \lambda_{*}}}{D\omega\tof{x_{0}}} \tinner{D\omega\tof{x_{0}}}{\inverse{D_{x}f\tof{x_{0}, \lambda_{*}}}\tbracket{D_{\lambda}f\tof{x_{0}, \lambda_{*}}}}} = \sign{\trace{D_{x}f\tof{x_{0},\lambda_{*}}}}
	\end{equation}
	holds if and only if
	\begin{equation}
		\tinner{D_{\lambda}f\tof{x_{0}, \lambda_{*}}}{D\omega\tof{x_{0}}} \tinner{D\omega\tof{x_{0}}}{\inverse{D_{x}f\tof{x_{0}, \lambda_{*}}}\tbracket{D_{\lambda}f\tof{x_{0}, \lambda_{*}}}} \trace{D_{x}f\tof{x_{0},\lambda_{*}}} > 0
	\end{equation}
	the result follows from Proposition~\ref{prop:degenerate-transcritical-bifurcation}.
\end{proof}

With this results we are now finally in position to prove Theorem~\ref{thm:adaptive-voter-model-order-2-closure}.

\begin{proof}[Proof of Theorem~\ref{thm:adaptive-voter-model-order-2-closure}]
	Assuming that the closure relation $\tparenth{2\,\tbracket{\A\B\A} - \tbracket{\A\A\B}, 2\,\tbracket{\B\A\B} - \tbracket{\A\B\B}} = H\tof{\tbracket{\A\A}, \tbracket{\B\B}}$ is rational and that $H\tof{\tbracket{\A\A}, \tbracket{\B\B}} = 0$ whenever $\tbracket{\A\A} + \tbracket{\B\B} = M$, due to Lemma~\ref{lem:hyperplane-polynomial-factorisation} we have that $H\tof{\tbracket{\A\A}, \tbracket{\B\B}} = \tparenth{M - \tbracket{\A\A} - \tbracket{\B\B}} \tilde{H}\tof{\tbracket{\A\A}, \tbracket{\B\B}}$, where $\tilde{H}\tof{\tbracket{\A\A}, \tbracket{\B\B}}$ is again rational.

	Applying the closure relation to system \eqref{eq:reduced-general-adaptive-voter-model}, we obtain the system
	
	\begin{equation}
		\dot{\Xi} = \underbrace{\frac{1}{2} \tparenth{M - \Xi_{1} - \Xi_{2}}}_{=: \omega\tof{\Xi}} \underbrace{\parenth{ \begin{psmallmatrix} 1 \\ 1 \end{psmallmatrix} + \tparenth{1 - p} \tilde{H}\tof{\Xi} }}_{=: f\tof{\Xi, p}} \quad \text{with $\Xi \equiv \begin{psmallmatrix} \tbracket{\A\A} \\ \tbracket{\B\B} \end{psmallmatrix}$.}
	\end{equation}
	
	By assumption, there exists $\Xi_{0} = \begin{psmallmatrix} \theta_{*} M \\ \tparenth{1 - \theta_{*}} M \end{psmallmatrix}$ for which $\omega\tof{\Xi_{0}} = 0$ and $D\omega\tof{\Xi_{0}} = -\frac{1}{2} \begin{psmallmatrix} 1 \\ 1 \end{psmallmatrix} \neq 0$. Moreover, $\tilde{H}_{1}\tof{\Xi_{0}} = \tilde{H}_{2}\tof{\Xi_{0}} < -1$ so that $p_{*} = 1 + \frac{1}{\tilde{H}_{1}\tof{\Xi_{0}}} = 1 + \frac{1}{\tilde{H}_{2}\tof{\Xi_{0}}}$ with $0 < p_{*} < 1$ and $f\tof{\Xi_{0}, p_{*}} = 0$.
	
	Since $D_{\Xi}f\tof{\Xi_{0}, p_{*}} = \tparenth{1 - p_{*}} D_{\Xi}\tilde{H}\tof{\Xi_{0}}$, $D_{\Xi}f\tof{\Xi_{0}, p_{*}}$ is invertible as $\det{D_{\Xi}\tilde{H}\tof{\Xi_{0}}} \neq 0$.
	
	Now, using the fact that for any $A \in \operatorname{Mat}_{\reals}\tof{2 \times 2}$,
	\begin{equation}
		\inner{\begin{psmallmatrix} 1 \\ 1 \end{psmallmatrix}}{\inverse{A} \begin{psmallmatrix} 1 \\ 1 \end{psmallmatrix}} = \frac{\trace{\begin{psmallmatrix} 1 & -1 \\ -1 & 1 \end{psmallmatrix} A}}{\det{A}} \mathpunctuation{,}
	\end{equation}
	we get that $D_{p}f\tof{\Xi_{0}, p_{*}}$ and $\inverse{D_{\Xi}f\tof{\Xi_{0}, p_{*}}}\tbracket{D_{p}f\tof{\Xi_{0}, p_{*}}} \not\in T_{\Xi_{0}}\Omega_{0}$ since $D_{p}f\tof{\Xi_{0}, p_{*}} = - \tilde{H}\tof{\Xi_{0}} = \frac{1}{1 - p_{*}} \begin{psmallmatrix} 1 \\ 1 \end{psmallmatrix}$ and also $D\omega\tof{\Xi_{0}} \propto \begin{psmallmatrix} 1 \\ 1 \end{psmallmatrix}$.
	
	Thus, the claim follows from Corollary~\ref{cor:planar-degenerate-transcritical-bifurcation} by noting that furthermore $\det{D_{\Xi}f\tof{\Xi_{0}, p_{*}}} > 0$ if and only if $\det{D_{\Xi}\tilde{H}\tof{\Xi_{0}}} > 0$ and that in this case
	\begin{equation}
		\begin{split}
			&\mathrelphantom{=} \tinner{D_{p}f\tof{\Xi_{0}, p_{*}}}{D\omega\tof{\Xi_{0}}} \tinner{D\omega\tof{\Xi_{0}}}{\inverse{D_{\Xi}f\tof{\Xi_{0}, p_{*}}}\tbracket{D_{p}f\tof{\Xi_{0}, p_{*}}}} \trace{D_{\Xi}f\tof{\Xi_{0},p_{*}}} \\
			&= \frac{1}{2 \tparenth{1 - p_{*}}^{2}} \inner{\begin{psmallmatrix} 1 \\ 1 \end{psmallmatrix}}{\inverse{D_{\Xi}\tilde{H}\tof{\Xi_{0}}} \begin{psmallmatrix} 1 \\ 1 \end{psmallmatrix}} \trace{D_{\Xi}\tilde{H}\tof{\Xi_{0}}} \\
			&= \frac{1}{2 \tparenth{1 - p_{*}}^{2}} \frac{\trace{D_{\Xi}\tilde{H}\tof{\Xi_{0}}}}{\det{D_{\Xi}\tilde{H}\tof{\Xi_{0}}}} \trace{\begin{psmallmatrix} 1 & -1 \\ -1 & 1 \end{psmallmatrix} D_{\Xi}\tilde{H}\tof{\Xi_{0}}} > 0
		\end{split}
	\end{equation}
	if and only if $\trace{D_{\Xi}\tilde{H}\tof{\Xi_{0}}} \trace{\begin{psmallmatrix} 1 & -1 \\ -1 & 1 \end{psmallmatrix} D_{\Xi}\tilde{H}\tof{\Xi_{0}}} > 0$ provided that $\det{D_{\Xi}\tilde{H}\tof{\Xi_{0}}} > 0$.

	As for the criticality of the bifurcation, we note that in the direction of the gradient to the manifold of trivial equilibria, the number of $\tbracket{\A\A} + \tbracket{\B\B}$ is decreasing and, consequently, $\tbracket{\A\B}$ is increasing. Hence, the bifurcation is supercritical (subcritical) if the branch of non-trivial equilibria crosses the manifold of trivial equilibria either in the direction of the gradient while gaining (losing) stability or in the opposite direction while losing (gaining) stability.
	Thus, the bifurcation is supercritical (subcritical) if $\trace{D_{\Xi}\tilde{H}\tof{\Xi_{0}}} \overset{\tof{>}}{<} 0$.
\end{proof}

\section{Discussion}

In this work, we have studied the problem of finding moment closure relations from a dynamical perspective, focusing specifically on bifurcations. We considered the mean-field moment system of the SIS epidemic and the adaptive voter model and derived concrete conditions that a closure for these systems must satisfy in order for the closed mean-field system to locally exhibit the kind of transcritical bifurcation that is apparent in the mean of numerical simulations of the stochastic particle system and that is therefore expected. Even though their bifurcation is seemingly the same, the crucial difference between the two systems is that the trivial equilibrium is degenerate in the latter, making the analysis more involved because standard results from bifurcation theory cannot be directly applied.

In the light of the conditions for existence of a bifurcation, we then examined some frequently used closures for both the SIS epidemic and the adaptive voter mean-field model and showed that they indeed satisfy these conditions. Interestingly, in the case of the SIS epidemic model, we showed that the standard triple closure always gives rise to a transcritical bifurcation, but whether it is super- or subcritical depends on the coefficient~\parencite{kuehn2021universal}. Moreover, we also considered some hypothetical closures, which illustrated how our results can determine good classes of closure relations, provide clear evidence for their validity, and identify failures precisely.
More generally, the conditions that we have derived for a closure to preserve the local bifurcation immediately translate into a Taylor expansion of such a closure around the bifurcation point that characterises the set of good closures. The closures in this set are all equivalent in that they preserve the local bifurcation, so that in order to further constrain this set, one would need to formulate additional criteria.

To derive conditions on a moment closure relation that guarantee the existence of bifurcation, we \textit{a priori} assumed that a generic closure relation exists. Using this closure, we obtained a closed, low-dimensional system on which we performed a bifurcation analysis, which resulted in said conditions for bifurcations that were motivated by observations of numerical simulations. However, this is very different from performing a bifurcation analysis on the entire mean-field system. Results from the probabilistic analysis of contact processes do indeed suggest that for contact processed on an infinite network, a branching point, and thus a bifurcation, exists~\parencite[Chapter~VI]{liggett1999stochastic}, but showing this in the exact mean-field system is an open problem.

Importantly, in this work we have focused solely on transcritical bifurcations with the advantage that the location of the equilibrium in phase space does not depend on the closure. While the same would be true also for a pitchfork bifurcation from a trivial branch, it is not generically true, for instance, for a saddle-node bifurcation. In fact, for these other bifurcations, the dependence of the equilibrium location on the closure itself poses additional challenges when one is attempting to derive conditions that ensure the bifurcation is preserved.

It may be tempting to extend our approach across all of phase and parameter space at once; i.e. instead of requiring consistency within the main bifurcation points and localised dynamical phenomena one by one, one could aim at starting from global structures. Yet, imposing global constraints on the dynamics, e.g. as proposed in the context of fluid mechanics via the Jacobi identity for the Poisson brackets~\parencite{edwards1997time}, drastically limits the space of possible, potentially very well approximatingm and highly practical closures.

Overall, the idea of shifting the perspective from moment closures that provide quantitative good approximations across the entire phase space to moment closures that give rise to key qualitative features that one might want to preserve through the closure has potentially far-reaching consequences. While the former requires one to draw on heuristics, the latter allows one to produce a concrete classification of suitable closure relations, as we have demonstrated in this work.
Moreover, with the models already being an abstraction of real-world phenomena, in applications it is, arguably, often more important to have low-dimensional representations of a given system that reproduce qualitative features such as bifurcations than to require them to also be quantitatively accurate~\parencite{valdes2011built,scheffer2012anticipating}.

However, this does not have to be a rejection of moment closures that also provide quantitative approximations. On the contrary, we argue that if the latter is important or desirable, the conditions that one obtains when trying to preserve some qualitative features can, in fact, be used to guide the search for relations that also yield quantitative good approximations.
For instance, in recent years there has been a lot of research on learning the governing equations of non-linear dynamical systems from data~\parencite{schmidt2009distilling,brunton2016discovering,daniels2015automated,raissi2018deep}. Hence, if the dynamical system is known up to terms of certain order, these methods could also be utilised to learn closures and conditions like those we have derived can be implemented to inform and guide the learning procedure. As such, the approach we propose here can help to explain the validity of closures that may have been found either intuitively or by a data-driven approach, making it quite flexible.

Viewing moment closures in terms of certain qualitative features that one wants to preserve in the closed system is not restricted to bifurcations. In fact, to illustrate this point and give a simple example, we note that, instead of a whole bifurcation, one could just attempt to characterise closures that preserve one of the equilibria. In addition, as moment closure methods are applied more widely and not limited to network dynamical systems, it will be interesting to see whether the ideas presented here can be carried over to, e.g. stochastic differential equations and the moment systems associated with the solution process and some qualitative feature of the same~\parencite{kuehn2016moment}.

\subsection*{Acknowledgements}
J.M. and C.K. acknowledge funding from the Deut\-sche For\-schungs\-ge\-mein\-schaft (DFG, German Research Foundation) as well as partial support from the Volks\-wagen\-Stif\-tung (Volks\-wagen Foundation) via a Lich\-ten\-berg Professorship awarded to C.K.

\clearpage

\printbibliography[heading=bibintoc]

\end{document}